\newcommand{\A}{\mathcal{A}}
\renewcommand{\H}{\mathcal{H}}
\renewcommand{\qedsymbol}{$\blacksquare$}
\newtheorem{theorem}{\emph{Theorem}}
\newtheorem{proposition}{Proposition}
\newtheorem{lemma}{Lemma}
\newtheorem{remark}{Remark}
\newtheorem{definition}{Definition}
\newtheorem{example}{Example}
\newtheorem{assumption}{Assumption}
\newtheorem{problem}{Problem}
\title{\LARGE \bf
\textcolor{black}{Zero-Determinant strategies in repeated multiplayer social dilemmas with discounted payoffs}}
\author{Alain Govaert, Ming Cao
\thanks{The work was supported in part by the European Research Council (ERC-CoG-771687) and the Netherlands Organization for Scientific Research (NWO-vidi-14134). }
\thanks{A. Govaert and M. Cao are with
        ENTEG, Faculty of Science and Engineering,
        University of Groningen, The Netherlands, \{a.govaert, m.cao\}@rug.nl.
    }
}
\begin{document}
\maketitle
 \thispagestyle{plain}
\pagenumbering{arabic}
 \pagestyle{plain}

\begin{abstract}
In two-player repeated games, Zero-Determinant (ZD) strategies enable a player to
unilaterally enforce a linear payoff relation between her own and
her opponent’s payoff irrespective of the opponent’s strategy. This
manipulative nature of the ZD strategies attracted significant
attention from researchers due to its close connection to controlling
distributively the outcome of evolutionary games in large
populations. In this paper, necessary and sufficient conditions
are derived for a payoff relation to be enforceable in multiplayer social dilemmas with a finite
expected number of rounds that is determined by a fixed and
common discount factor. Thresholds exist for such a
discount factor above which desired
payoff relations can be enforced. Our results show that depending
on the group size and the ZD-strategist’s initial probability to
cooperate  there exist extortionate, generous and equalizer
ZD-strategies. The threshold discount factors rely on
the desired payoff relation and
the variation in the single-round payoffs.
To show the utility of our results, we apply them
to multiplayer social dilemmas, and
show how discounting affects ZD Nash equilibria.

\end{abstract}

\begin{IEEEkeywords}
Game theory, multiplayer games, repeated games, zero-determinant strategies.
\end{IEEEkeywords}


\section{INTRODUCTION}
\IEEEPARstart{T}{he} functionalities of many complex social systems rely on  their composing individuals' willingness to set aside their personal interest for the benefit of the greater good \cite{nowak2011supercooperators}.  
One mechanism for the evolution of cooperation is known as \textit{direct reciprocity}: even if in the short run it pays off to be selfish, mutual cooperation can be favoured when the individuals encounter each other repeatedly. 
Direct reciprocity is often studied in the standard model of repeated games and it is only recently, inspired by the discovery of a novel class of strategies, called zero-determinant (ZD) strategies  \cite{press2012iterated}, that repeated games began to be examined from a new angle by investigating the level of \emph{control} that a single player can exert on the average payoff of the opponents.  
In \cite{press2012iterated} Press and Dyson showed that in \emph{infinitely} repeated $2\times 2$ prisoners dilemma games, if a player can remember the actions in the previous round, this player can \emph{unilaterally} impose some linear relation between his/her own payoff and that of the opponent.
It is emphasized that this enforced linear relation cannot be avoided even if the opponent employs some intricate strategy with a large memories.
Such strategies are called \textit{zero-determinant} because they enforce a part of the transition matrix to have a determinant that is equal to zero.
Later, ZD strategies were extended to games with more than two possible actions \cite{stewart2016evolutionary}, continuous action spaces \cite{mcavoy2017autocratic}, alternative moves \cite{mcavoy2016autocratic}, \textcolor{black}{and observation errors \cite{mamiya2019strategies}. These game-theoretical advancements were subsequently applied to a variety of engineering contexts including cybersecurity in smart grid systems \cite{7007772}, sharing of spectrum bands and resources \cite{6985752,7037497}, and power control of small cell networks \cite{7024762}.}

The success of ZD strategies was also examined from an evolutionary perspective in \cite{hilbe2013evolution,stewart2012extortion}. For a given population size, in the limit of weak selection it was shown in \cite{stewart2013extortion} that all ZD strategies that can survive an invasion of any memory-one strategy must be ``generous'', namely enforcing a linear payoff relation that favors \textit{others}.
This surprising fact was tested experimentally in \cite{hilbe2014extortion}. In \cite{hilbe2018partners} the literature on ZD strategies, direct reciprocity and evolution is reviewed.
Most of the literature focuses on two-player games; 
however, in \cite{pan2015zero} the existence of ZD-startegies in infinitely repeated public goods games was shown by extending the arguments in \cite{press2012iterated} to a symmetric public goods game.
Around the same time, characterization 
of the feasible ZD strategies in multiplayer social dilemmas and those strategies that maintain cooperation in such multiplayer games were reported in  \cite{hilbe2014cooperation}. 
Both in \cite{hilbe2014cooperation} and \cite{pan2015zero} it was noted that group size $n$ imposes restrictive conditions on the set of feasible ZD strategies and that alliances between co-players can overcome this restrictive effect of the group size. The evolutionary success of ZD strategies in such multiplayer games was studied in \cite{hilbe2015evolutionary} and the results show that sustaining large scale cooperation requires the formation of alliances. 
 ZD strategies for repeated $2\times 2$ games with discounted payoffs were defined and characterized in \cite{hilbe2015partners}. In this setting, the discount factor may also be interpreted as a continuation probability that determines the finite number of expected rounds. 
The threshold discount factors above which the ZD strategies can exist were derived in \cite{ichinose2018zero}.
\textcolor{black}{In this paper we use the framework of ZD strategies in infinitely repeated multiplayer social dilemmas from \cite{hilbe2014cooperation} and extend it to the case in which future payoffs are discounted with a fixed and common discount factor, or equivalently, to a repeated game with a finite expected number of rounds. }
We then build upon our results in \cite{govaert2019zero}, in which enforceable payoff relations were characterized, by developing new theory that allows us to express threshold discount factors that determine how \textit{fast} a desired linear payoff relation can be enforced in a multiplayer social dilemma game. \textcolor{black}{These results extend the work of  \cite{ichinose2018zero} to a broad class of multiplayer social dilemmas.}
Our general results are applicable to multiplayer and two player games and can be applied to a variety of complex social dilemma settings including the famous prisoner's dilemma, the public goods game, the volunteer's dilemma, the multiplayer snowdrift game and much more. 
\textcolor{black}{The derived threshold discount factors show how the group size and the payoff functions of the social dilemma affect one's possibilities for exerting control given a constraint on the expected number of interactions, and shows how the discount factor affects Nash equilibria of the repeated game.}
These results can thus be used to investigate, both analytically and experimentally, the effect of the group size and the initial condition on the level of control that a single player can exert in a repeated multiplayer social dilemma game with a finite but undetermined number of rounds.
\textcolor{black}{From an evolutionary perspective, our results may also open the door for novel control techniques that seek to achieve or sustain cooperation in large social systems that evolve under evolutionary forces.\cite{riehl2018survey}}

The paper is organized as follows. 
In section \ref{preliminaries}, preliminaries concerning the game model and strategies are provided.
In section \ref{Mean dist}, the mean distribution of the repeated multiplayer game with discounting and the relation to a memory-one strategy is given. 
In section \ref{fzdsec},  ZD strategies for repeated multiplayer games with discounting are defined, and in section \ref{main results} the enforceable payoff relations are characterized. In section \ref{Threshold}, threshold discount factors are given for generous, extortionate and equalizer ZD strategies. 
We apply our results to the multiplayer linear public goods game and the multiplayer snowdrift game in Section \ref{sec: application to pgg}. In Section \ref{sec: proofs of the main results}, we provide the proofs of our main results.
We conclude the paper in Section \ref{fw}.


\section{Preliminaries}\label{preliminaries}
\subsection{Symmetric multiplayer games}
In this paper we consider multiplayer games in which $n\geq 2$ players can repeatedly choose to either cooperate or defect. 
The set of actions for each player is denoted by $\A=\{C,D\}$.
The actions chosen in the group in round $t$ of the repeated game is described by an action profile $\sigma^t\in \boldsymbol{\A} =\{C,D\}^n$.
A player's payoff in a given round depends on the player's own action and the actions of the $n-1$ co-players. 
In a group in which $z$ co-players cooperate, a cooperator receives payoff $a_z$, whereas a defector receives $b_z$.
As in \cite{hilbe2014cooperation,pan2015zero} we assume the game is symmetric, such that the outcome of the game depends only on one's own decision and the number of cooperating co-players, and hence does not depend on which of the co-players have cooperated. 
Accordingly, the payoffs of all possible outcomes for a player can be conveniently summarized in table \ref{multiplayer payoff}.
\begin{table}[ht]
\centering
\begin{tabular}{ccccccc}
\hline
\rowcolor[HTML]{EFEFEF} 
\begin{tabular}[c]{@{}c@{}}Number of cooperators \\ among co-players\end{tabular} & $n-1$                & $n-2$                & $\dots$              & $2$                  & $1$                  & $0$                  \\ \hline
Cooperator's payoff                                                              & $a_{n-1}$            & $a_{n-2}$            & $\dots$              & $a_{2}$              & $a_{1}$              & $a_{0}$              \\
Defector's payoff                                                                & $b_{n-1}$            & $b_{n-2}$            & $\dots$              & $b_{2}$              & $b_{1}$              & $b_{0}$              \\ \hline
\multicolumn{1}{l}{}                                                             & \multicolumn{1}{l}{} & \multicolumn{1}{l}{} & \multicolumn{1}{l}{} & \multicolumn{1}{l}{} & \multicolumn{1}{l}{} & \multicolumn{1}{l}{}
\end{tabular}
\caption{Single-round payoffs of the symmetric multiplayer game.}
\label{multiplayer payoff}
\end{table}

We have the following assumptions on the single-round payoffs of the symmetric multiplayer game.
\begin{assumption}[Social dilemma assumption \cite{hilbe2014cooperation,kerr2004altruism}]\label{assumptions}
The payoffs of the symmetric multiplayer game satisfy the following conditions:
\begin{tasks}
    \task For all $0\leq z<n-1$, it holds that $a_{z+1}\geq a_z$ and $b_{z+1}\geq b_z$: irrespective of one's own action players prefer other group members to cooperate.
    \task For all $0\leq z<n-1$, it holds that $b_{z+1}>a_z$: within a mixed group defectors obtain strictly higher payoffs than cooperators.
    \task $a_{n-1}>b_0$: mutual cooperation is favored over mutual defection.
\end{tasks}
\end{assumption}
Assumption \ref{assumptions} is standard in multiplayer social dilemma games and ensures that there is an immediate benefit to defect against cooperators, while mutual cooperation leads to better, if not the best, collective outcome.

\begin{example}[Public goods game]\label{example: PGG}
As an example of a game that satisfies Assumption \ref{assumptions}, consider a public goods game in which each cooperator contributes an amount $c>0$ to a public good. 
The sum of the contributions get multiplied by an enhancement factor $1<r<n$ and then divided evenly among all group members. 
The payoff of a cooperator is
$a_z=\frac{rc(z+1)}{n}-c$, while the payoff of a defector is $b_z=\frac{rcz}{n}$, for $z=0,1,\dots,n-1$.
\end{example}


\begin{example}[Multiplayer snowdrift game]\label{example: NSD}
\textcolor{black}{Another example is the multiplayer snowdrift game that traditionally describes a situation in which cooperators need to clear out a snowdrift so that everyone can go on their merry way. By clearing out the snowdrift together, cooperators share a cost $c$ required to create a fixed benefit $b$ \cite{liang2015analysis,souza2009evolution,van2012multi,zheng2007cooperative}.
If a player cooperates together with $z$ group members, their one-shot payoff is $a_z=b-\frac{c}{z+1}$.
If there is at least one cooperator ($z>0$) who clears out the snowdrift, then defectors obtain a benefit $b_z=b$. If no one cooperates, the snowdrift will not be cleared and everyone's payoff is $b_0=0$. }
\end{example}

\subsection{Strategies}
In repeated games the players must choose how to update their actions as the game interactions are repeated over rounds. 
A \textit{strategy} of a player determines the conditional probabilities with which actions are chosen by the player. 
To formalize this concept we introduce some additional notation. 
A history of plays up to round $t$ is denoted by $h^t=(\sigma^0,\sigma^1,\dots,\sigma^{t-1})\in\boldsymbol{\A}^t$ such that each $\sigma^k\in \boldsymbol{\A}$ for all $k=0\dots t-1$. 
The union of possible histories is denoted by $\H=\cup_{t=0}^\infty \boldsymbol{\A}^t$, with $\boldsymbol{\A}^0=\emptyset$ being the empty set.
Finally, let $\Delta(\A)$ denote the probability distribution over the action set $\A$.
As is standard in the theory of repeated games, a strategy of player $i$ is then defined by a function $\rho:\H\rightarrow\Delta(\A)$ that maps the history of play to the probability distribution over the action set.
An interesting and important subclass of strategies are those that only take into account the action profile in round $t-1$, (i.e. $\sigma^{t-1}\in h^t$) to determine the conditional probabilities to choose some action in round $t+1$.
Correspondingly these strategies are called \textit{memory-one strategies}.
The theory of Press and Dyson showed that, for determining the best performing strategies in terms payoffs in two-action repeated games, it is sufficient to consider only the space of memory-one strategies \cite{press2012iterated,stewart2016evolutionary}.

\section{Mean distributions and memory-one strategies in repeated multiplayer games with discounting}\label{Mean dist}
\color{black}In this section we zoom in on a particular player that employs a memory-one strategy in the multiplayer game and refer to this player as the \textit{key player}.
In particular, we focus on the relation between the mean distribution of the action profile and the memory-one strategy of the key player. 
Let $p_{\sigma}\in[0,1]$ denote the probability that the key player cooperates in the next round given that the current action profile is $\sigma\in\boldsymbol{\mathcal{A}}$. 
By stacking the probabilities for all possible outcomes into a vector, we obtain the memory-one strategy $\mathbf{p}=(p_{\sigma})_{\sigma\in\mathbf{\boldsymbol{\mathcal{A}}}}$ whose elements determine the conditional probability for the key player to cooperate in the next round.
Accordingly, the memory-one strategy
$\mathbf{p}_{\sigma}^{\mathrm{rep}},$
gives the probability to cooperate when the current action is simply repeated.
To be more precise, let $\sigma=(\sigma_i,\sigma_{-i})$, where $\sigma_i\in\{C,D\}$ and $\sigma_{-i}\in\{C,D\}^{n-1}$. Then, for all $\sigma_{-i}$, the entries of the repeat strategy are given by $\mathbf{p}_{(C,\sigma_{-i})}^{\mathrm{rep}}=1$ and $\mathbf{p}_{(D,\sigma_{-i})}^{\mathrm{rep}}=0$.
To describe the relation between the memory-one strategy and the mean distribution of the action profile we introduce some additional notation.
Let $v_\sigma(t)$ denote the probability that the outcome of round $t$ is $\sigma\in\boldsymbol{\A}$,
and let $v(t)=(v_{\sigma}(t))_{\sigma\in\boldsymbol{\A}}$ be the vector of these outcome probabilities.
As in \cite{hilbe2015partners,ichinose2018zero,mcavoy2016autocratic,mcavoy2017autocratic} we focus on  repeated games with a finite but undetermined number of rounds.
Given the current round, a fixed and common discount factor, or continuation probability, $0<\delta<1$ determines the probability that a next round takes place.
By taking the limit of the geometric sum of $\delta$, the expected number of rounds is $\frac{1}{1-\delta}$.
As in \cite{hilbe2015partners}, the mean distribution of $v(t)$ is:
    \begin{equation}\label{limitv}
    \mathbf{v}=(1-\delta)\sum^\infty_{t=0}\delta^tv(t).     
    \end{equation}
 
 In this paper we are interested in the expected and discounted payoffs of the players in the repeated game. 
Let $g^i_{\sigma}$ denote the single-round payoff that player $i$ receives in the action profile $\sigma$. 
 The vector $g^i=(g^i_{\sigma})_{\sigma\in\mathbf{\boldsymbol{\mathcal{A}}}}$ thus contains all possible payoffs of player $i$ in a given round. 
 The \textit{expected} single-round payoff of player $i$ in round $t$ is then given by
     $\pi^i(t)= g^i\cdot v(t). $
    The \textit{average discounted payoff} of player $i$ in the repeated game is then \cite{mailath2006repeated} \color{black} 
    \begin{equation*}\label{FRFP}
    \begin{aligned}
         \pi^i=(1-\delta)\sum_{t=0}^\infty \delta^t\pi^i(t)
         =(1-\delta)\sum_{t=0}^\infty \delta^tg^i\cdot v(t)=g^i\cdot\mathbf{v}.
    \end{aligned}
    \end{equation*}
    The following lemma relates the limit distribution $\mathbf{v}$ to the memory-one strategy $\mathbf{p}$ of the key player. 
    The presented lemma is a straightforward multiplayer extension of the 2-player case that is given in \cite{hilbe2015partners} and relies on the fundamental results from \cite{akin2016iterated}. 
    \begin{lemma}[A fundamental relation] \label{limit dist}
    Suppose the key player applies memory-one strategy $\mathbf{p}$ and the strategies of the other players are arbitrary, but fixed.  
    Then, it holds that
    \begin{equation}\label{eq: Akin Lemma}
       (\delta\mathbf{p}-\mathbf{p}^{\mathrm{rep}})\cdot\mathbf{v}= -(1-\delta)p_0,
    \end{equation}
    where $p_0$ is the key player's initial probability to cooperate.
    \end{lemma}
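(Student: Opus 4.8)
The plan is to reduce the left-hand side to the key player's round-by-round probability of cooperating and then collapse the resulting geometric series by a telescoping argument. Write $c(t)$ for the probability that the key player cooperates in round $t$. The two strategy vectors appearing on the left each have a clean interpretation once paired with the outcome distribution $v(t)$. Since $\mathbf{p}^{\mathrm{rep}}_\sigma$ equals $1$ precisely when the key player's own action in $\sigma$ is $C$ and equals $0$ otherwise, the inner product $\mathbf{p}^{\mathrm{rep}}\cdot v(t)$ sums the probabilities of exactly those outcomes in which the key player cooperates, so $\mathbf{p}^{\mathrm{rep}}\cdot v(t)=c(t)$. By contrast, conditioning on the outcome $\sigma$ of round $t$ and using that the key player cooperates in the next round with conditional probability $p_\sigma$ by definition of the memory-one strategy, the law of total probability gives $\mathbf{p}\cdot v(t)=c(t+1)$. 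Equating the two expressions for the cooperation probability in round $t+1$ yields the single identity that drives the proof, namely $\mathbf{p}\cdot v(t)=\mathbf{p}^{\mathrm{rep}}\cdot v(t+1)$.

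Next I would substitute the definition $\mathbf{v}=(1-\delta)\sum_{t=0}^\infty \delta^t v(t)$ into each term of the left-hand side. Because all entries of $v(t)$, $\mathbf{p}$ and $\mathbf{p}^{\mathrm{rep}}$ lie in $[0,1]$ and $0<\delta<1$, the series converge absolutely and the finite inner product may be exchanged with the infinite sum. This turns $\mathbf{p}^{\mathrm{rep}}\cdot\mathbf{v}$ into $(1-\delta)\sum_{t=0}^\infty \delta^t c(t)$ and $\delta\,\mathbf{p}\cdot\mathbf{v}$ into $(1-\delta)\sum_{t=0}^\infty \delta^{t+1} c(t+1)$. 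Shifting the index $s=t+1$ in the latter rewrites it as $(1-\delta)\sum_{s=1}^\infty \delta^{s} c(s)$, which agrees term by term with the former for all $s\geq 1$.

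Forming the difference $\delta\,\mathbf{p}\cdot\mathbf{v}-\mathbf{p}^{\mathrm{rep}}\cdot\mathbf{v}$ therefore cancels every term except the $t=0$ term of $\mathbf{p}^{\mathrm{rep}}\cdot\mathbf{v}$, leaving exactly $-(1-\delta)c(0)$. Since $c(0)=p_0$ is the key player's initial probability to cooperate, this is the claimed relation \eqref{eq: Akin Lemma}. I expect no serious obstacle: the only points requiring care are the probabilistic reading of $\mathbf{p}\cdot v(t)=\mathbf{p}^{\mathrm{rep}}\cdot v(t+1)$ and the justification of interchanging sum and inner product. The conceptual heart is that this one-step identity holds no matter what the co-players do, since their arbitrary but fixed strategies are fully absorbed into the outcome distribution $v(t)$; this is precisely what allows the key player's enforced relation to be expressed through her own strategy vectors alone, as in Akin's lemma \cite{akin2016iterated}.
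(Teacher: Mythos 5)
Your proof is correct and follows essentially the same route as the paper's: both identify $\mathbf{p}^{\mathrm{rep}}\cdot v(t)$ and $\mathbf{p}\cdot v(t)$ with the key player's cooperation probabilities in rounds $t$ and $t+1$ (the paper's $q_C(t)$ and $q_C(t+1)$), then collapse the discounted sum so that only the $t=0$ term $-(1-\delta)p_0$ survives. The paper phrases the collapse as a telescoping partial sum whose boundary term $(1-\delta)\delta^{\tau+1}q_C(\tau+1)$ vanishes as $\tau\to\infty$, whereas you shift the index in an absolutely convergent series --- the same cancellation in slightly different dress.
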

    
    \begin{IEEEproof}
    The probability that $i$ cooperated in round $t$ is
    $q_C(t)= \mathbf{p}^{\mathrm{rep}}\cdot v(t).$
    And the probability that $i$ cooperates in round $t+1$ is
    $q_C(t+1)= \mathbf{p}\cdot v(t).$
    Now define, 
    \begin{equation}\label{ue}
         u(t):=\delta q_C(t+1)-q_C(t)=(\delta\mathbf{p}-\mathbf{p}^{\mathrm{rep}})\cdot v(t).
    \end{equation}
    Multiplying equation \eqref{ue} by $(1-\delta)\delta^t$ and summing up over $t=0,\dots,\tau$ we obtain the \textit{telescoping} sum  
   $(1-\delta)\sum^\tau_{t=0}\delta^tu(t)=(1-\delta)\delta^{\tau+1}q_c(\tau+1)-(1-\delta)q_c(0)$.
   Because $0<\delta<1$, it follows that
   $\underset{\tau\rightarrow\infty}{\text{lim}} (1-\delta)\sum^\tau_{t=0}\delta^tu(t)= -(1-\delta)p_0.$
The result follows by substituting the definition of $u(t)$ and $\mathbf{v}$. That is,
   $\underset{\tau\rightarrow\infty}{\text{lim}}(1-\delta)\sum^\tau_{t=0}\delta^t(\delta\mathbf{p}-\mathbf{p}^{\mathrm{rep}})\cdot v(t)=(\delta\mathbf{p}-\mathbf{p}^{\mathrm{rep}})\cdot\mathbf{v} =-(1-\delta)p_0.$
   \end{IEEEproof}
   
   \begin{remark}
   Note that in the limit $\delta\rightarrow 1$, the infinitely repeated game is recovered.
   In this setting, the expected number of rounds is infinite.
   If the limit exists, the payoff is given by
    $ \pi^i=\underset{\tau\rightarrow\infty}{\text{lim}}\frac{1}{\tau+1}\sum_{t=0}^\tau \pi^i(t).$
   By Akins Lemma (see \cite{akin2016iterated,hilbe2014cooperation}), for the repeated game without discounting, irrespective of the initial probability to cooperate, it holds that
   $(\mathbf{p}-\mathbf{p}^{\mathrm{rep}})\cdot \mathbf{v}= 0.$
  Hence, a key difference between the repeated game with and without discounting is that $p_0$ remains important for the relation between the memory-one strategy $\mathbf{p}$ and the mean distribution $\mathbf{v}$ when the game has a finite number of expected interactions.
  In the limit $\delta\rightarrow 1$, the importance of the initial conditions on the relation between $\mathbf{p}$ and $\mathbf{v}$ disappears \cite{hilbe2014cooperation}. 
   \end{remark}

  
  \section{ZD-strategies in multiplayer games with discounting}\label{fzdsec}
  \color{black}We now investigate the effect that the key player's memory-one strategy can have on the average discounted payoffs in the repeated game. We will use $i$ to indicate the key player and $j$ to indicate his/her co-players.
  Let $g^j_\sigma$ denote the single-round payoff of player $j$ in action profile $\sigma\in\boldsymbol{\mathcal{A}}$, and let $g^j=(g^j_\sigma)_{\sigma\in\boldsymbol{\mathcal{A}}}$ be the vector of these payoffs. 
  Based on Lemma \ref{limit dist} we now formally define a ZD strategy for a multiplayer game with discounting. For this we let $\mathds{1}=(1)_{\sigma\in\mathbf{\boldsymbol{\mathcal{A}}}}$.  
  
\color{black}

\begin{definition}[ZD strategy]\label{ZDrewritten}
A memory-one strategy $\mathbf{p}$ with all entries in the closed unit interval is a ZD strategy if there exist constants $s,l,\phi$ and weights $w_j$, for $j\neq i$  such that
\begin{align}\label{eq:nfZD}
      \delta\mathbf{p}=\mathbf{p}^{\mathrm{rep}}+\phi\left[sg^i -\sum_{j\neq i}^nw_jg^j+(1-s)l\mathds{1}\right]-(1-\delta)p_0\mathds{1},
\end{align}
 under the conditions that $\phi\neq 0$ and $\sum_{j\neq i}^nw_j=1$. 
\end{definition}

\begin{remark}
\textcolor{black}{
When $\delta=1$, the ZD strategy in Definition \eqref{ZDrewritten} recovers the ZD strategies studied in \cite{hilbe2014cooperation}. We will elaborate on the effect of the discount factor $\delta$ in Sections \ref{Threshold} and \ref{sec: application to pgg}.}
\end{remark}

 \begin{remark}
 \textcolor{black}{
When all weights are equal, i.e. $w_j=\frac{1}{n-1}$ for all $j\neq i$, the formulation of the ZD strategy for a symmetric multiplayer social dilemma can be simplified using only the number of cooperators in the social dilemma.
To this end, let $g^{-i}_{\sigma_i,z}$ denote the average single-round payoff of the $n-1$ co-players of $i$ when player $i$ selects action $\sigma_i\in\{C,D\}$ and $0\leq z\leq n-1$ co-players cooperate. 
Using the single-round payoffs in Table \ref{multiplayer payoff} this can be written as $g^{-i}_{C,z}=\frac{a_{z}z+(n-1-z)b_{z+1}}{n-1}$, and  $g^{-i}_{D,z}=\frac{a_{z-1}z+(n-1-z)b_{z}}{n-1}$. We obtain $g^{-i}=(g^{-i}_{\sigma_i,z})$ by stacking these payoffs into a vector.
Similarly, let $v_{\sigma_i,z}(t)$ denote the probability that at round $t$, player $i$ chooses action $\sigma_i$ and $0\leq z\leq n-1$ co-players cooperate, and let $v(t)=(v_{\sigma_i,z}(t))\in[0,1]^{2n}$ be the vector of these outcome probabilities. 
The expected payoff of player $i$ at time $t$ is again given by $\pi^i(t)= g^i\cdot v(t)$.  Moreover, the average expected payoff of the co-players at time $t$ can be conveniently written as $\pi^{-i}(t)= g^{-i}\cdot v(t)$.
The mean distribution of $v(t)$ is again obtained using \eqref{limitv}, but now the entries of $\mathbf{v}$ provide the fraction of rounds in the repeated game in which player $i$ chooses $\sigma_i$ and $z$ players cooperate.  Then, as before, $\pi^i=g^{i}\cdot\mathbf{v}$ and  $\pi^{-i}=g^{-i}\cdot\mathbf{v}$ which leads to the ZD strategy
$$\delta\mathbf{p}=\mathbf{p}^{\mathrm{rep}}+\alpha g^i+ g^{-i}+(\gamma -(1-\delta)p_0)\mathds{1}_{2n}.$$}
\end{remark}
Let $w=(w_i)\in\mathbb{R}^{n-1}$ denote the vector of weights that the ZD strategist assigns to her co-players. 
The following proposition shows how the ZD strategy can enforce a linear relation between the key player's  average discounted payoff (from now on simply called payoff) and a weighted average payoff of his/her co-players.
\begin{proposition}[Enforcing a linear payoff relation]
Suppose the key player employs a fixed ZD strategy with parameters $s$, $l$ and weights $w$ as in Definition \ref{ZDrewritten}. Then, irrespective of the fixed strategies of the remaining $n-1$ co-players, payoffs obey the equation
     \begin{equation}\label{eq: payoff relation s l}
        \pi^{-i}=s\pi^i+(1-s)l,
    \end{equation}
    where $\pi^{-i}=\sum_{j\neq i}^nw_j\pi^j$.
\end{proposition}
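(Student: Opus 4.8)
The plan is to reduce everything to Lemma~\ref{limit dist} by taking a single inner product with the mean distribution $\mathbf{v}$. First I would rearrange the defining equation~\eqref{eq:nfZD} to isolate the vector $\delta\mathbf{p}-\mathbf{p}^{\mathrm{rep}}$, obtaining
\begin{equation*}
\delta\mathbf{p}-\mathbf{p}^{\mathrm{rep}}=\phi\left[sg^i-\sum_{j\neq i}^nw_jg^j+(1-s)l\mathds{1}\right]-(1-\delta)p_0\mathds{1}.
\end{equation*}
The left-hand side is precisely the object controlled by Lemma~\ref{limit dist}, so dotting both sides with $\mathbf{v}$ will let me replace $(\delta\mathbf{p}-\mathbf{p}^{\mathrm{rep}})\cdot\mathbf{v}$ by $-(1-\delta)p_0$ regardless of the co-players' strategies, which is exactly the unilateral-enforcement feature we want to exploit.

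Next I would carry out the inner product on the right-hand side term by term. The three structural identities I need are $\pi^i=g^i\cdot\mathbf{v}$, the definition $\pi^{-i}=\sum_{j\neq i}^nw_j\pi^j=\sum_{j\neq i}^nw_j(g^j\cdot\mathbf{v})$ obtained by linearity, and the normalization $\mathds{1}\cdot\mathbf{v}=1$. This last fact is the only nontrivial ingredient: it holds because $\mathbf{v}=(1-\delta)\sum_{t}\delta^t v(t)$ is a convex combination of probability vectors $v(t)$ with weights $(1-\delta)\delta^t$ summing to one, so $\mathbf{v}$ is itself a probability distribution over $\boldsymbol{\A}$ and its entries sum to $1$. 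With these, the right-hand side becomes $\phi\left[s\pi^i-\pi^{-i}+(1-s)l\right]-(1-\delta)p_0$.

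Equating the two sides gives
\begin{equation*}
-(1-\delta)p_0=\phi\left[s\pi^i-\pi^{-i}+(1-s)l\right]-(1-\delta)p_0,
\end{equation*}
and here the crucial cancellation occurs: the initial-condition terms $-(1-\delta)p_0$ on both sides are identical precisely because $\mathds{1}\cdot\mathbf{v}=1$, so they drop out and leave $\phi\left[s\pi^i-\pi^{-i}+(1-s)l\right]=0$. Invoking the standing hypothesis $\phi\neq0$ from Definition~\ref{ZDrewritten}, I may divide through and rearrange to conclude $\pi^{-i}=s\pi^i+(1-s)l$, which is the claimed relation~\eqref{eq: payoff relation s l}.

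I do not expect a genuine obstacle here; the argument is essentially a linear-functional evaluation. The one point requiring care is the bookkeeping on the initial-condition term: one must verify that the $-(1-\delta)p_0\mathds{1}$ appearing inside the ZD definition contracts against $\mathbf{v}$ to the same scalar $-(1-\delta)p_0$ produced by Lemma~\ref{limit dist}, so that the two contributions cancel exactly rather than merely approximately. This is what makes the enforced relation clean and free of any residual dependence on $p_0$, and it is the step I would state explicitly rather than leave implicit.
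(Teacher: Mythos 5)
Your proof is correct and follows exactly the paper's route: the paper's own proof is the one-line remark that the result follows by substituting the ZD definition \eqref{eq:nfZD} into the fundamental relation \eqref{eq: Akin Lemma}, which is precisely your inner-product argument with $\mathbf{v}$. Your explicit verification of $\mathds{1}\cdot\mathbf{v}=1$ and the exact cancellation of the $-(1-\delta)p_0$ terms simply spells out the details the paper leaves implicit.
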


\begin{IEEEproof}
The proof follows by substituting \eqref{eq:nfZD} into \eqref{eq: Akin Lemma}. 
\end{IEEEproof}

   
  \textcolor{black}{
    For positive slopes $s$ and weights $w_j>0$ for all $j\neq i$, the linear payoff relation in \eqref{eq: payoff relation s l} ensures that the \textit{collective best response} of the co-players also maximizes the benefits of the key player.
    This is particularly interesting in social dilemmas in which the payoff increases with the number of cooperating co-players.
  The strength of the correlation between the payoffs is determined by the slope $s$ of the linear payoff relation. 
    For positive slopes $0<s<1$, the baseline payoff results in a generous ($l=a_{n-1}$) or extortionate ($l=b_0$) payoff relation. 
    The former typically implies a relative performance in which co-players, on average, 
    do better than the ZD strategist ($\pi^{-i}\leq \pi^{i}$), while the latter typically implies the ZD strategist outperforms the average payoff of his/her co-players ($\pi^{-i}\leq \pi^{i}$) \cite{hilbe2014cooperation}.  
    The theoretical ability of generous and extortionate ZD strategies to promote selfless cooperation of co-players was empirically studied in \cite{hilbe2014extortion,wang2016extortion}.
   Two special cases of the linear payoff relation remain of interest. 
   When $s=1$ the average payoff of the co-players is equal to the payoff of the key player. Such ZD strategies are called fair and were proven to exist in infinitely repeated social dilemmas in \cite{hilbe2014cooperation}. 
   In the other extreme $s=0$,  payoffs are not correlated but the key player can set the average payoff of the co-player to the baseline payoff $l$. Table \ref{4 ZD} summarizes the most studied ZD strategies.}

 \begin{table}[ht]
  \caption{The four most studied ZD strategies.}\label{4 ZD}
    \centering
    \begin{tabular}{ccc}
    \rowcolor[HTML]{EFEFEF} 
ZD strategy  & Parameter values    & Enforced payoff relation \\ \hline
Fair         & $s=1$               & $\pi^{-i}=\pi^i$         \\
Generous     & $l=a_{n-1}$, $0<s<1$ & $\pi^{-i}=s\pi^i+(1-s)a_{n-1}$      \\
Extortionate & $l=b_0$, $0<s<1$    & $\pi^{-i}=s\pi^i+(1-s)b_0$      \\
Equalizer    & $s=0$               & $\pi^{-i}=l$             \\ \hline
    \end{tabular}
\end{table}

  \textcolor{black}{ Because the entries of the ZD strategy correspond to conditional probabilities, they are required to belong to the unit interval and not every linear payoff relation with parameters $s,l$ is can be enforced. 
  For repeated games with discounting, the discount factor that determines the expected number of rounds is part of the ZD strategy and therefore influences the set of enforceable payoff relations. 
  We will focus on the role of the discount factor $\delta$ in the remainder of the paper.}
    Consider the following definition that was given in \cite{hilbe2015partners}
    for two-player games.
    \begin{definition}[Enforceable payoff relations]\label{def: enforceable}
Given a discount factor $0<\delta<1$, a payoff relation $(s,l)\in\mathbb{R}^2$ with weights $w$ is enforceable if there are $\phi\in\mathbb{R}$ and $p_0\in[0,1]$, such that each entry in $\delta\mathbf{p}$ according to equation \eqref{eq:nfZD} is in $[0,\delta]$. We indicate the set of enforceable payoff relations by $\mathcal{E}_\delta$.
\end{definition}

An intuitive implication of decreasing the expected number of rounds in the repeated game (by decreasing $\delta$) is that the set of enforceable payoff relations will decrease as well. 
This monotone effect is formalized in the following proposition that extends a result from \cite{hilbe2015partners} to the multiplayer case.
\begin{proposition}[Monotonicity of $\mathcal{E}_\delta$]\label{monotone}
If $\delta'\leq\delta''$, then $\mathcal{E}_{\delta'}\leq\mathcal{E}_{\delta''}.$
\end{proposition}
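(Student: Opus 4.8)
The plan is to argue directly from Definition~\ref{def: enforceable}, reading $\mathcal{E}_{\delta'}\le\mathcal{E}_{\delta''}$ as the set inclusion $\mathcal{E}_{\delta'}\subseteq\mathcal{E}_{\delta''}$. First I would isolate the part of \eqref{eq:nfZD} that does not depend on $\delta$: writing $f_\sigma=sg^i_\sigma-\sum_{j\neq i}^n w_jg^j_\sigma+(1-s)l$ for the entries of the bracketed vector, enforceability of the relation $(s,l)$ with weights $w$ at discount $\delta$ is, by Definition~\ref{def: enforceable}, equivalent to the existence of $\phi\neq0$ and $p_0\in[0,1]$ with
\begin{equation*}
\mathbf{p}^{\mathrm{rep}}_\sigma+\phi f_\sigma-(1-\delta)p_0\in[0,\delta]\qquad\text{for all }\sigma\in\boldsymbol{\mathcal{A}}.
\end{equation*}
The key structural observation is that $f_\sigma$ is independent of $\delta$, so only the additive shift $(1-\delta)p_0$ and the width $\delta$ of the admissible interval react to a change in the discount factor.

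Given a relation enforceable at $\delta'$ with witnesses $(\phi',p_0')$, I would set the scalar shift $A'=(1-\delta')p_0'\in[0,1-\delta']$ and restate the $\delta'$-condition as the band membership $\mathbf{p}^{\mathrm{rep}}_\sigma+\phi' f_\sigma\in[A',\,A'+\delta']$ for every $\sigma$. The natural attempt is to reuse $\phi''=\phi'$ at $\delta''$ and only re-tune the initial cooperation probability. The main obstacle is that one cannot in general reproduce the same shift: forcing $(1-\delta'')p_0''=A'$ demands $p_0''=A'/(1-\delta'')$, which may exceed $1$ once $\delta''$ is close to $1$, since $1-\delta''\le 1-\delta'$. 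The remedy is to saturate, choosing
\begin{equation*}
A''=\min\{A',\,1-\delta''\},\qquad p_0''=\frac{A''}{1-\delta''}\in[0,1],
\end{equation*}
so that $A''\le A'$ always and $p_0''$ is admissible.

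The verification then reduces to a short case analysis on whether the saturation is active. Because $A''\le A'$, the lower constraint $\mathbf{p}^{\mathrm{rep}}_\sigma+\phi' f_\sigma\ge A''$ is immediate from the $\delta'$-band. For the upper constraint I would check $A'+\delta'\le A''+\delta''$: if $A'\le 1-\delta''$ then $A''=A'$ and the claim is just $\delta'\le\delta''$; otherwise $A''=1-\delta''$, whence $A''+\delta''=1\ge A'+\delta'$ using $A'\le 1-\delta'$. In either case $\mathbf{p}^{\mathrm{rep}}_\sigma+\phi' f_\sigma\le A'+\delta'\le A''+\delta''$, so the $\delta''$-box condition holds with $\phi''=\phi'\neq0$ and $p_0''\in[0,1]$, giving $(s,l,w)\in\mathcal{E}_{\delta''}$. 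I expect this case split on the shift to be the only delicate point; the remainder is a direct transcription of Definition~\ref{def: enforceable}.
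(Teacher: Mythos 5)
Your proof is correct and takes essentially the same route as the paper: you isolate the $\delta$-independent vector with entries $\mathbf{p}^{\mathrm{rep}}_\sigma+\phi f_\sigma$ (the paper's $\mathbf{p}^{\infty}$) and show the admissible band grows with $\delta$, which is exactly the paper's argument. Your saturation case-split is avoidable, though: reusing the same witnesses $\phi''=\phi'$ and $p_0''=p_0'$ already suffices, since for fixed $p_0\in[0,1]$ the lower endpoint $(1-\delta)p_0$ is non-increasing and the upper endpoint $\delta+(1-\delta)p_0$ is non-decreasing in $\delta$ (its derivative is $1-p_0\geq 0$), so the $\delta'$-band is nested inside the $\delta''$-band without re-tuning the initial cooperation probability.
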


\begin{IEEEproof}
Albeit with different formulations of $\mathbf{p}$, the proof follows from the same argument used in the the two-player case \cite{hilbe2014cooperation}. It is presented here to make the paper self-contained.
From Definition \ref{def: enforceable}, $(s,l)\in\mathcal{E}_\delta$ if and only if one can find $\phi\in\mathbb{R}$ and $p_0\in[0,1]$ such that $\mathbf{p}\in[0,1]^n$. Let $\mathbb{0}=(0)_{\sigma\in\mathbf{\mathcal{A}}}$, we have
\begin{equation}
\begin{aligned}
     \mathbb{0}\leq\mathbf{p}\leq \mathds{1} \Rightarrow
    \mathbb{0}\leq\delta\mathbf{p}\leq \delta\mathds{1}.
\end{aligned}   
\end{equation}
Then by substituting \eqref{eq:nfZD} into the above inequality we obtain,
 \begin{equation}\label{bounds on infty ZD}
    p_0(1-\delta)\mathds{1}\leq\mathbf{p}^{\infty}\leq\delta\mathds{1}+(1-\delta)p_0\mathds{1}, 
 \end{equation}
with $$\mathbf{p}^{\infty}=\mathbf{p}^{\mathrm{rep}}+\phi\left[sg^i -\sum_{j\neq i}^nw_jg^j+(1-s)l\mathds{1}\right].$$
Now observe that $p_0(1-\delta)\mathds{1}$ on the left-hand side of the inequality \eqref{bounds on infty ZD} is decreasing for increasing $\delta$. Moreover, $\delta\mathds{1}+(1-\delta)p_0\mathds{1}$ on the right-hand side of the inequality is increasing for increasing $\delta$.  The middle part of the inequality, which is exactly the definition of a ZD strategy for the infinitely repeated game in \cite{hilbe2014cooperation}, is independent of $\delta$. It follows that by increasing $\delta$ the range of possible ZD parameters $(s,l,\phi)$ and $p_0$ increases and hence if $\mathbb{0}\leq\mathbf{p}\leq\mathds{1}$ is satisfied for some $\delta'$, then it is also satisfied for some $\delta''\geq\delta'$.
\end{IEEEproof}

We are now ready to state the existence problem studied in this paper.

\begin{problem}[The existence problem]
For the class of multiplayer social dilemmas with payoffs as in Table \ref{multiplayer payoff} that satisfy Assumption \ref{assumptions}, what are the enforceable payoff relations when the expected number of rounds is finite, i.e., $\delta\in(0,1)$?
\end{problem}

\textcolor{black}{Characterizing the set of enforceable payoff relations is important not only because it describes the possibilities for a single player to exert control in the repeated game, but also because it allows us to characterize the equilibrium set for ZD strategies. If all weights are equal and players apply the same ZD strategy, then all players' payoff is $l$. 
The incentive to deviate from the common ZD strategy can then be analysed with respect to the enforced baseline payoff $l$ and the enforced linear payoff relations to obtain Nash equilibrium conditions. 
If the set of enforceable payoff relations includes the minimum and maximum average group payoff per round, then the Nash equilibrium conditions can be extended to arbitrary ``mutant'' strategies \cite{hilbe2014cooperation}. 
Including the discount factor in the characterization of the enforceable payoff relations thus allows to explain how Nash equilibria of the repeated social dilemma can change under the influence of discounting. 
In Section \ref{sec: application to pgg}, we return to this using Example \ref{example: PGG} and \ref{example: NSD}. }

\section{Existence of ZD strategies in symmetric multiplayer social dilemmas with discounting}\label{main results}

  In this section, we present our results on the existence problem. The proofs of these results are found in Section \ref{sec: proofs of the main results}.
  We begin by formulating conditions on the parameters of the ZD strategy that are necessary for the payoff relation to be enforceable in the finitely repeated multiplayer game.
  
\begin{proposition}[Necessary conditions]\label{necessary}
The enforceable payoff relations $(l,s,w)$ in the repeated multiplayer game with $\delta\in(0,1)$ and single-round payoffs as in Table \ref{multiplayer payoff} that satisfy Assumption \ref{assumptions}, require 
$ -\frac{1}{n-1}\leq-\underset{j\neq i}{min}\ w_j < s< 1,$
$\phi>0,$ and
    $b_0\leq l \leq a_{n-1}$,
with at least one strict inequality. 
\end{proposition}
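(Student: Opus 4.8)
The plan is to work from the inequality chain established in the proof of Proposition \ref{monotone}: the relation $(s,l,w)$ is enforceable only if there exist $\phi$ and $p_0\in[0,1]$ with $(1-\delta)p_0\mathds{1}\le \mathbf{p}^{\infty}\le (\delta+(1-\delta)p_0)\mathds{1}$, where $\mathbf{p}^{\infty}=\mathbf{p}^{\mathrm{rep}}+\phi[sg^i-\sum_{j\ne i}w_jg^j+(1-s)l\mathds{1}]$. The key observation is that this confines every entry of $\mathbf{p}^{\infty}$ to a common interval of \emph{width} $\delta<1$, so any two entries differ by at most $\delta$. I would then evaluate $\mathbf{p}^{\infty}$ at a handful of carefully chosen action profiles and read off the stated conditions from these width constraints combined with Assumption \ref{assumptions} and the payoffs of Table \ref{multiplayer payoff}.

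First I would evaluate the two homogeneous profiles. Using $\sum_{j\ne i}w_j=1$, the all-cooperate profile gives $\mathbf{p}^{\infty}=1+\phi(1-s)(l-a_{n-1})$ and the all-defect profile gives $\mathbf{p}^{\infty}=\phi(1-s)(l-b_0)$. Subtracting, the difference equals $1-\phi(1-s)(a_{n-1}-b_0)$; since it lies in $[-\delta,\delta]$ and $a_{n-1}-b_0>0$ by Assumption \ref{assumptions}, I obtain $\phi(1-s)(a_{n-1}-b_0)\ge 1-\delta>0$, hence $\phi(1-s)>0$. Feeding this back into the right-hand bound at all-cooperate (which is $\le1$ because $p_0\le1$) yields $l\le a_{n-1}$, and into the left-hand bound at all-defect (which is $\ge0$ because $p_0\ge0$) yields $l\ge b_0$; since $a_{n-1}>b_0$, at least one inequality is strict.

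The crux is separating the individual signs of $\phi$, $1-s$ and $s+w_k$. For a fixed co-player $k$ I would compare the profile $\sigma^\ast$ in which $i$ cooperates, $k$ defects and the remaining $n-2$ co-players cooperate, with the profile $R$ in which $i$ defects and all co-players cooperate. A direct computation---this is the step where the choice of profiles matters---exhibits the convenient cancellation $\mathbf{p}^{\infty}_{\sigma^\ast}-\mathbf{p}^{\infty}_{R}=1+\phi(s+w_k)(a_{n-2}-b_{n-1})$, in which all the payoff data collapse into the single factor $a_{n-2}-b_{n-1}<0$ (Assumption \ref{assumptions}, since $b_{n-1}>a_{n-2}$). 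The width bound then forces $\phi(s+w_k)>0$ for every $k$.

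Finally I would combine the two product-sign facts. If $\phi<0$, then $\phi(1-s)>0$ forces $s>1$ while $\phi(s+w_k)>0$ forces $s<-w_k$ for every $k$, i.e. $s<-\max_k w_k\le-\tfrac1{n-1}<0$, a contradiction. Hence $\phi>0$, whence $1-s>0$ and $s+w_k>0$ for all $k$, i.e. $s<1$ and $s>-\min_{j\ne i}w_j$; the remaining bound $-\min_j w_j\ge-\tfrac1{n-1}$ holds because the smallest of $n-1$ weights summing to one is at most $\tfrac1{n-1}$. The main obstacle is exactly this sign-separation: the homogeneous profiles alone only deliver the product $\phi(1-s)>0$ and the bounds on $l$, and it is the special mixed profiles $\sigma^\ast,R$---engineered so that their $\mathbf{p}^{\infty}$-difference depends on the payoffs only through $a_{n-2}-b_{n-1}$---that let the normalization $\sum_{j\ne i}w_j=1$ pin down the sign of $\phi$.
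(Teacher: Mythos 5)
Your proposal is correct and takes essentially the same approach as the paper: it evaluates the strategy at the same four profiles (all-cooperate, all-defect, and the two single-defector profiles), obtains $\phi(1-s)>0$ and $\phi(s+w_k)(b_{n-1}-a_{n-2})\geq 1-\delta$ from Assumption \ref{assumptions}, and extracts the bounds on $l$ from the homogeneous profiles exactly as in Section \ref{sec: proofs of the main results}. Your ``width-$\delta$'' difference bound and the contradiction argument for $\phi>0$ are only cosmetic repackagings of the paper's combined two-sided inequalities \eqref{allcneq}, \eqref{alldneq}, \eqref{CDi}, \eqref{CDj} and its summation step $\phi(1-s)+\phi(s+w_j)=\phi(1+w_j)>0$.
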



In the following theorem we extend the results from \cite{hilbe2014cooperation} to multiplayer social dilemmas with discounting. To write the statement compactly, we let $a_{-1}=b_{n}=0$. Moreover, let $\hat{w}_z=\underset{w_h\in w}{\text{min}}(\sum_{h=1}^zw_h)$ denote the sum of the $z$ smallest weights and let $\hat{w}_0=0$.

\begin{theorem}[Characterizations of enforceable payoff relations]\label{Enforceable}
 For the repeated multiplayer game with one-shot payoffs as in Table \ref{multiplayer payoff} that satisfy Assumption \ref{assumptions}, the payoff relation $(s,l)\in\mathbb{R}^2$ with weights $w\in\mathbb{R}^{n-1}$ is enforceable for some $\delta\in(0,1)$ if and only if $-\frac{1}{n-1}<s<1$ and
\begin{equation}\label{l bounds complete}
\begin{aligned}
     \underset{0\leq z\leq n-1}{\text{max}}\left\{ b_z-\frac{\hat{w}_z(b_z-a_{z-1})}{(1-s)} \right\} &\leq l, \\
     \underset{0\leq z\leq n-1}{\text{min}}\left\{ a_{z}+\frac{\hat{w}_{n-z-1}(b_{z+1}-a_{z})}{(1-s)} \right\}  &\geq l,
\end{aligned}
\end{equation}
moreover, at least one inequality in \eqref{l bounds complete} is strict.
\end{theorem}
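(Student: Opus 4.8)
The plan is to reduce the enforceability condition of Definition \ref{def: enforceable} to a pair of sign conditions on the entries of the strategy together with a single ``window-length'' condition, and then to translate these into the bounds on $l$ in \eqref{l bounds complete}. As in the proof of Proposition \ref{monotone}, I write $\mathbf{p}^{\infty}=\mathbf{p}^{\mathrm{rep}}+\phi[sg^i-\sum_{j\neq i}^n w_jg^j+(1-s)l\mathds{1}]$, so that, substituting \eqref{eq:nfZD}, the relation $(s,l,w)$ is enforceable for some $\delta\in(0,1)$ precisely when there is a $\phi\neq 0$ for which the entries of $\mathbf{p}^{\infty}$ all lie in a window $[(1-\delta)p_0,\,\delta+(1-\delta)p_0]\subseteq[0,1]$ of length $\delta<1$. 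Writing $m=\min_\sigma \mathbf{p}^{\infty}_\sigma$ and $M=\max_\sigma \mathbf{p}^{\infty}_\sigma$, an admissible $p_0\in[0,1]$ and $\delta\in(0,1)$ exist if and only if $m\geq 0$, $M\leq 1$ and $M-m<1$: the first two conditions say the entries fit inside $[0,1]$, while the last says they fit inside a window strictly shorter than the whole interval (given $M-m<1$ one picks any $\delta\in(M-m,1)$ and positions the window with $p_0$).

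By Proposition \ref{necessary} I may restrict attention to $\phi>0$ and $1-s>0$. Splitting the action profiles according to the key player's own action, the repeat-strategy entry equals $1$ on profiles where $i$ cooperates and $0$ where $i$ defects, so the entries of $\mathbf{p}^{\infty}$ are $1+\phi c^{C}_\sigma$ and $\phi c^{D}_\sigma$ respectively, where $c^{C}_\sigma, c^{D}_\sigma$ denote the corresponding bracketed coefficients. For $\phi>0$ the requirements $M\leq 1$ and $m\geq 0$ are then equivalent to the sign conditions $c^{C}_\sigma\leq 0$ on every cooperating profile and $c^{D}_\sigma\geq 0$ on every defecting profile: if some coefficient had the wrong sign the corresponding entry would exceed $1$ or drop below $0$ for \emph{every} $\phi>0$, while if all signs are correct then any sufficiently small $\phi>0$ keeps every entry in $[0,1]$.

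The heart of the argument is to make these sign conditions explicit. Because the game is symmetric, a profile is determined by $i$'s action, the number $z$ of cooperating co-players, and the total weight $W$ of those cooperators; the coefficients are affine in $W$ with positive slopes $b_z-a_{z-1}>0$ (defecting profiles) and $b_{z+1}-a_z>0$ (cooperating profiles with $z<n-1$) by Assumption \ref{assumptions}. Hence, for each fixed $z$, the binding profile is the one with the smallest admissible $W$ on defecting profiles and the largest on cooperating profiles, and these extreme weights are exactly $\hat{w}_z$ and $1-\hat{w}_{n-1-z}$. Substituting them and dividing by $1-s>0$ converts ``$c^{D}_\sigma\geq 0$ for all $z$'' and ``$c^{C}_\sigma\leq 0$ for all $z$'' into the lower and upper bounds on $l$ in \eqref{l bounds complete}, respectively.

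Finally, for small $\phi>0$ one has $M-m=1+\phi\,(\max_\sigma c^{C}_\sigma-\min_\sigma c^{D}_\sigma)$, so $M-m<1$ is achievable if and only if $\max_\sigma c^{C}_\sigma<0$ or $\min_\sigma c^{D}_\sigma>0$, which is exactly the statement that at least one inequality in \eqref{l bounds complete} is strict; conversely, if both held with equality, some cooperating entry would equal $1$ and some defecting entry would equal $0$ for every $\phi$, forcing $M-m=1$ and hence $\delta=1$. The constraint $-\tfrac{1}{n-1}<s<1$ is necessary by Proposition \ref{necessary} (and the lower bound is in fact subsumed by the non-emptiness of the interval \eqref{l bounds complete}). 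I expect the main obstacle to be the bookkeeping in this weight optimization: verifying that the extremal weights $\hat{w}_z$ and $1-\hat{w}_{n-1-z}$ are simultaneously attainable across all profiles of a given size, and checking that the boundary cases $z=0$ and $z=n-1$, where $W$ is forced and the slopes change sign, remain consistent with the convention $a_{-1}=b_n=0$.
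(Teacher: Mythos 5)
Your proposal is correct and takes essentially the same route as the paper's own proof: both reduce enforceability to per-profile sign conditions on the bracketed coefficients (exploiting that $\phi>0$ can be taken arbitrarily small), extremize over the co-players' weights to identify the binding profiles with weight sums $\hat{w}_z$ and $\hat{w}_{n-1-z}$, and so obtain the bounds \eqref{l bounds complete}. The only difference is one of packaging: you enforce strictness via the window-length condition $M-m<1$, whereas the paper derives it by observing that a non-strict bound forces $p_0=1$ (resp.\ $p_0=0$) in \eqref{si=c bounds} (resp.\ \eqref{si=d bounds}), which in turn makes the opposite bound strict --- the two arguments are equivalent.
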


\begin{remark}
For $n=2$ the full weight is placed on the single opponent i.e., $\hat{w}_j=1$. When the payoff parameters are defined as $b_1=T$, $b_0=P$, $a_1=R$, $a_0=S$, the result in Theorem \ref{Enforceable} recovers the earlier result in \cite{hilbe2015partners}.
\end{remark}


\textcolor{black}{An immediate consequence of Theorem \ref{Enforceable} is that fair strategies with $s=1$, that \textit{always} exist in infinitely repeated social dilemmas without discounting \cite{hilbe2014cooperation}, \textit{never} exist in these games when payoffs are discounted. For example, proportional Tit-for-Tat, that is a fair ZD strategy for the infinitely repeated public goods game \cite{hilbe2014cooperation}, is \textit{not} a ZD strategy in the repeated public goods game with discounting. 
In Section \ref{sec: application to pgg}, we apply our results to two well-known multiplayer social dilemmas to illustrate the crucial role of $\delta$ in the possibilities for enforcing a linear payoff relation.}

Theorem \ref{Enforceable} does not stipulate any conditions on the key player's initial probability to cooperate other than $p_0\in[0,1]$. However, the existence of extortionate and generous strategies does depend on the value of $p_0$. This is formalized in the following Lemma that was also observed in \cite{hilbe2014cooperation,ichinose2018zero}.

\begin{lemma}[Necessary conditions on $p_0$]\label{p0con}
For the existence of extortionate strategies it is necessary that $p_0=0$.
 Moreover, for the existence of generous ZD strategies it is necessary that $p_0=1$.
\end{lemma}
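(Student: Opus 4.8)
The plan is to test the defining equation \eqref{eq:nfZD} at the two \emph{monomorphic} action profiles, namely mutual defection $\sigma^D=(D,\dots,D)$ and mutual cooperation $\sigma^C=(C,\dots,C)$, and to exploit the enforceability requirement from Definition \ref{def: enforceable} that every entry of $\delta\mathbf{p}$ lies in $[0,\delta]$. The reason these two profiles are the right ones to probe is that at each of them the bracketed payoff term in \eqref{eq:nfZD} collapses: at $\sigma^D$ every player (the key player and all co-players) defects against $z=0$ cooperators and hence earns $b_0$, while at $\sigma^C$ everyone cooperates alongside $z=n-1$ cooperators and hence earns $a_{n-1}$. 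Using $\sum_{j\neq i}w_j=1$, the corresponding co-player average $\sum_{j\neq i}w_jg^j$ equals exactly $g^i$ at either profile, so the bracket reduces to $(1-s)(l-b_0)$ at $\sigma^D$ and to $(1-s)(l-a_{n-1})$ at $\sigma^C$.

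For the extortionate case I would set $l=b_0$ and read off the $\sigma^D$-entry. Since $\mathbf{p}^{\mathrm{rep}}_{\sigma^D}=0$ (player $i$ defects), equation \eqref{eq:nfZD} gives
\begin{equation*}
\delta p_{\sigma^D}=\phi(1-s)(l-b_0)-(1-\delta)p_0=-(1-\delta)p_0,
\end{equation*}
because the bracket vanishes when $l=b_0$. Enforceability forces $\delta p_{\sigma^D}\geq 0$, while $p_0\in[0,1]$ and $0<\delta<1$ give $-(1-\delta)p_0\leq 0$; the two can only hold simultaneously if $p_0=0$, which is the claim.

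For the generous case I would symmetrically set $l=a_{n-1}$ and read off the $\sigma^C$-entry, where now $\mathbf{p}^{\mathrm{rep}}_{\sigma^C}=1$ (player $i$ cooperates), yielding
\begin{equation*}
\delta p_{\sigma^C}=1+\phi(1-s)(l-a_{n-1})-(1-\delta)p_0=1-(1-\delta)p_0,
\end{equation*}
since the bracket again vanishes. Enforceability requires $\delta p_{\sigma^C}\leq\delta$, i.e.\ $1-\delta\leq(1-\delta)p_0$, and dividing by $1-\delta>0$ forces $p_0\geq 1$; combined with $p_0\leq 1$ this gives $p_0=1$. The only point demanding genuine care—and hence the step I would flag as the main obstacle—is the bookkeeping of the co-players' single-round payoffs at the two monomorphic profiles: one must verify that the symmetry of Table \ref{multiplayer payoff} makes each co-player earn precisely $b_0$ at $\sigma^D$ and $a_{n-1}$ at $\sigma^C$, so that the weighted average matches $g^i$ and the bracket annihilates exactly for the extortionate and generous baselines respectively. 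Once this collapse is established, the sign arguments are immediate, and crucially they rely on the discounting term $-(1-\delta)p_0$ being genuinely present (it would vanish in the $\delta\to 1$ limit), which is what ties the necessity of these $p_0$ values to the finitely-repeated setting.
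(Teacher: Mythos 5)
Your proof is correct and takes essentially the same route as the paper: the paper's argument likewise evaluates $\delta\mathbf{p}$ at the all-defect and all-cooperate profiles (equations \eqref{alld} and \eqref{allc} in the proof of Proposition \ref{necessary}, rearranged as \eqref{alldneq} and \eqref{allcneq}), deducing $p_0=0$ from the lower bound at mutual defection when $l=b_0$ and $p_0=1$ from the lower bound at mutual cooperation when $l=a_{n-1}$. Your check that the bracketed payoff term collapses at the monomorphic profiles (since $\sum_{j\neq i}w_j=1$ makes the weighted co-player payoff equal $g^i$ there) is exactly the computation underlying the paper's equations, so there is no gap.
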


These requirements on the key player's initial probability to cooperate make intuitive sense. 
In a finitely repeated game, if the key player aims to be an \textit{extortioner} that profits from the cooperative actions of others, she cannot start to cooperate because she could be taken advantage off by defectors.  On the other hand, if she aims to be \textit{generous}, she cannot start as a defector because this will punish both cooperating and defecting co-players. 
\textcolor{black}{ 
The requirements on the key player's initial probability to cooperate are also useful in characterizing the effect of the discount factor $\delta$ on the set of enforceable slopes. This will be investigated in the next section.}

 \section{Thresholds on discount factors}\label{Threshold}
In the previous section we have characterized the enforceable payoff relations of ZD strategies in multiplayer social dilemma games with discounted payoffs. 
Our conditions generalize those obtained for two-player games and illustrate how a single player can exert control over the outcome of a multiplayer social dilemma with a finite number of expected rounds. 
The conditions that result from the existence problem do not specify requirements on the discount factor other than $\delta\in(0,1)$. 
However, as we will see the discount factor or ``patience'' of the players in the multiplayer social dilemma heavily influences the possibilities to exert control in the repeated multiplayer social dilemma.
Threshold discount factors, above which a payoff relation can be enforced, provide insight into the minimum number of expected interactions that are required to enforce a desired linear payoff relation. 
In this section we address the following problem, that was studied for two player games in \cite{ichinose2018zero}.
 
 \begin{problem}[The minimum threshold problem]
Suppose the desired payoff relation $(s,l)\in\mathbb{R}^2$ satisfies the conditions in Theorem \ref{Enforceable}. What is the minimum $\delta\in(0,1)$ under which the linear relation $(s,l)$ with weights $w$ can be enforced by the ZD strategist?
 \end{problem}

 We consider the three classes of ZD strategies separately.
 Before giving the main results it is necessary to introduce some additional notation. 
Define $\tilde{w}_z=\underset{w_h\in w}{\text{max}}\sum_{h=1}^z w_h$ to be the maximum sum of weights for some permutation of $\sigma\in\boldsymbol{\A}$ with $z$ cooperating co-players. 
Additionally, for a payoff relation $(s,l)\in\mathbb{R}^2$ and weights $w\in\mathbb{R}^{n-1}$ define 
\begin{align}\label{eq: rho functions}
    \overline{\rho}^C&:=\underset{\scriptscriptstyle0\leq z\leq n-1}{\text{max}} (1-s)(a_z-l)+\tilde{w}_{n-z-1}(b_{z+1}-a_z),\nonumber \\
\underline{\rho}^C&:=\underset{\scriptscriptstyle0\leq z\leq n-1}{\text{min}} (1-s)(a_z-l)+\hat{w}_{n-z-1}(b_{z+1}-a_z), \nonumber\\
 \overline{\rho}^D&:=\underset{\scriptscriptstyle0\leq z\leq n-1}{\text{max}}(1-s)(l-b_z)+\tilde{w}_{z}(b_z-a_{z-1}),\nonumber\\
\underline{\rho}^D&:=\underset{\scriptscriptstyle0\leq z\leq n-1}{\text{min}}(1-s)(l-b_z)+\hat{w}_{z}(b_z-a_{z-1}).
\end{align}
In the following, we will use these extrema to derive threshold discount factors for extortionate, generous and equalizer strategies in symmetric multiplayer social dilemma games. The proofs of our results can be found in Section \ref{sec: proofs of the main results}.

 \subsection{Extortionate ZD strategies}
 We first consider the case in which $l=b_0$ and $0<s<1$, such that the ZD strategy is extortionate. We have the following result.
 
 \begin{theorem}[Extortion thresholds]\label{prop: extortionate thresholds}
Assume $p_0=0$ and the payoff relation $(s,b_0)\in\mathbb{R}^2$ satisfies the conditions in Theorem \ref{Enforceable}, then $\overline{\rho}^C>0$ and  $\overline{\rho}^D+\underline{\rho}^C>0$.
Moreover, the threshold discount factor above which the extortionate payoff relation can be enforced is determined by
 \begin{equation*}
         \delta_\tau=\text{max}\left\{\frac{\overline{\rho}^C-\underline{\rho}^C}{\overline{\rho}^C}, \frac{\overline{\rho}^D}{\overline{\rho}^D+\underline{\rho}^C} \right\}.
 \end{equation*}
 \end{theorem}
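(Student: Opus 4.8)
The plan is to evaluate the entries of $\delta\mathbf{p}$ in \eqref{eq:nfZD} profile-by-profile under the extortion hypotheses $l=b_0$ and $p_0=0$, so that the final term drops out and the requirement of Definition~\ref{def: enforceable} that every entry of $\delta\mathbf{p}$ lie in $[0,\delta]$ collapses to a handful of scalar inequalities. First I would split the action profiles according to whether the key player cooperates or defects. For a profile in which $i$ cooperates and $z$ co-players cooperate, $\mathbf{p}^{\mathrm{rep}}=1$, and a direct computation from Table~\ref{multiplayer payoff} shows the bracketed term equals $-[(1-s)(a_z-l)+U(b_{z+1}-a_z)]$, where $U$ is the total weight of the $n-1-z$ \emph{defecting} co-players; symmetrically, when $i$ defects the entry reads $\phi[(1-s)(l-b_z)+W(b_z-a_{z-1})]$ with $W$ the total weight of the $z$ cooperating co-players. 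Since Assumption~\ref{assumptions}(b) gives $b_{z+1}-a_z>0$ and $b_z-a_{z-1}>0$, each entry is monotone in the relevant weight-sum, so as the co-player configuration ranges over all subsets the extreme entries are obtained by replacing $U,W$ with the smallest and largest admissible partial sums $\hat{w},\tilde{w}$. This is precisely how $\overline{\rho}^C,\underline{\rho}^C,\overline{\rho}^D,\underline{\rho}^D$ in \eqref{eq: rho functions} arise.

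Next I would settle the two positivity claims. Evaluating the cooperative family at $z=n-1$ (so $\hat{w}_0=\tilde{w}_0=0$) yields $(1-s)(a_{n-1}-b_0)$, which is strictly positive because $0<s<1$ and $a_{n-1}>b_0$ by Assumption~\ref{assumptions}(c); hence $\overline{\rho}^C>0$. For the second claim I would first translate the bounds \eqref{l bounds complete} of Theorem~\ref{Enforceable} with $l=b_0$ into $\underline{\rho}^C\ge 0$ and $\underline{\rho}^D\ge 0$ with at least one inequality strict. The key observation is that for extortion $\underline{\rho}^D=0$ automatically: the defective family at $z=0$ equals $(1-s)(b_0-b_0)=0$, while enforceability makes every other term nonnegative, so the minimum is $0$. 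Consequently the required strictness must come from the cooperative side, forcing $\underline{\rho}^C>0$; together with $\overline{\rho}^D\ge\underline{\rho}^D=0$ this gives $\overline{\rho}^D+\underline{\rho}^C>0$.

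With the extremal entries identified, I would reduce ``every entry of $\delta\mathbf{p}$ lies in $[0,\delta]$'' to four inequalities in the single unknown $\phi$: from the cooperative entries $1-\phi\rho^C$ one gets $\phi\overline{\rho}^C\le 1$ and $\phi\underline{\rho}^C\ge 1-\delta$, and from the defective entries $\phi\rho^D$ one gets $\phi\overline{\rho}^D\le\delta$ and $\phi\underline{\rho}^D\ge 0$; monotonicity in the weight-sums guarantees these four suffice for all profiles. Using $\phi>0$ (Proposition~\ref{necessary}) the last is automatic because $\underline{\rho}^D=0$, and a feasible $\phi$ exists if and only if $\tfrac{1-\delta}{\underline{\rho}^C}\le\min\{\tfrac{1}{\overline{\rho}^C},\tfrac{\delta}{\overline{\rho}^D}\}$, where $\underline{\rho}^C>0$ is exactly what makes the left side finite. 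Solving the two resulting inequalities for $\delta$ produces $\delta\ge\frac{\overline{\rho}^C-\underline{\rho}^C}{\overline{\rho}^C}$ and $\delta\ge\frac{\overline{\rho}^D}{\overline{\rho}^D+\underline{\rho}^C}$; both bounds are strictly below $1$ by the positivity just established, so their maximum is an admissible discount factor, and by the monotonicity of $\mathcal{E}_\delta$ in Proposition~\ref{monotone} it is exactly the minimal threshold $\delta_\tau$.

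I expect the main obstacle to be the bookkeeping that reduces the combinatorial family of $2^{n-1}$ co-player configurations to the four extremal $\rho$-quantities, together with the subtle but decisive point that extortion pins $\underline{\rho}^D$ to $0$: without it the strictness argument breaks down, $\underline{\rho}^C$ could equal $0$, and the threshold would degenerate to $1$. Once the four inequalities are in place, the remaining algebra of solving the linear inequalities for $\delta$ is routine.
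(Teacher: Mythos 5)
Your proposal is correct and follows essentially the same route as the paper's proof: with $p_0=0$ the unit-interval constraints on $\delta\mathbf{p}$ reduce, via the weight-sum extrema $\hat{w},\tilde{w}$, to $\phi\underline{\rho}^C\geq 1-\delta$, $\phi\overline{\rho}^C\leq 1$ and $0\leq\phi\overline{\rho}^D\leq\delta$, and eliminating $\phi>0$ gives exactly the two bounds $\delta\geq\frac{\overline{\rho}^C-\underline{\rho}^C}{\overline{\rho}^C}$ and $\delta\geq\frac{\overline{\rho}^D}{\overline{\rho}^D+\underline{\rho}^C}$, including the same separate treatment of the degenerate cases $\overline{\rho}^C=\underline{\rho}^C$ and $\overline{\rho}^D=0$. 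The only (harmless) presentational difference is in the positivity claims, where the paper reads $\rho^C(\sigma)>0$ off the $\phi$-inequalities themselves, while you derive $\underline{\rho}^C>0$ from the strictness clause of Theorem \ref{Enforceable} --- observing that $l=b_0$ forces the lower bound in \eqref{l bounds complete} to hold with equality (equivalently $\underline{\rho}^D=0$), so the upper bound must be strict --- and obtain $\overline{\rho}^C>0$ by direct evaluation at $z=n-1$ using Assumption \ref{assumptions}; both arguments are valid.
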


\subsection{Generous ZD strategies}
If a player instead aims to be generous, in general, different thresholds will apply. Thus, we now consider the case in which $l=a_{n-1}$ and $0<s<1$ such that the ZD strategy is generous. 
 \begin{theorem}[Generosity thresholds]\label{prop: generous thresholds}
Assume $p_0=1$ and the payoff relation $(s,a_{n-1})\in\mathbb{R}^2$ satisfies the conditions in Theorem \ref{Enforceable}. 
Then $\overline{\rho}^D>0$ and $\overline{\rho}^C+\underline{\rho}^D>0$. Moreover, the threshold discount factor above which the generous payoff relation can be enforced is determined by
 \begin{equation*}
      \delta_\tau=\text{max}\left\{\frac{\overline{\rho}^D-\underline{\rho}^D}{\overline{\rho}^D}, \frac{\overline{\rho}^C}{\overline{\rho}^C+\underline{\rho}^D} \right\}.
 \end{equation*}
 \end{theorem}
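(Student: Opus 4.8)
The plan is to specialize the enforceability condition to the generous parameters and reduce it, through an extremal analysis over which co-players cooperate, to a small system of scalar inequalities in the single free scaling $\phi$; the threshold $\delta_\tau$ is then the least $\delta$ for which this system has a solution with $\phi>0$. Concretely, I would first fix $p_0=1$ (forced by Lemma~\ref{p0con}) and $l=a_{n-1}$, and use the reformulation already isolated in the proof of Proposition~\ref{monotone}, namely that enforceability is equivalent to $(1-\delta)\mathds{1}\leq\mathbf{p}^{\infty}\leq\mathds{1}$ with $\mathbf{p}^{\infty}=\mathbf{p}^{\mathrm{rep}}+\phi[sg^i-\sum_{j\neq i}w_jg^j+(1-s)a_{n-1}\mathds{1}]$. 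Evaluating $\mathbf{p}^{\infty}$ entrywise via the symmetric payoff table and splitting on the key player's own move, a profile in which $i$ plays $C$ against $z$ cooperating co-players yields the bracket $B_C=-[(1-s)(a_z-a_{n-1})+W'(b_{z+1}-a_z)]$, where $W'$ is the total weight of the $n-1-z$ defecting co-players, while a profile in which $i$ plays $D$ yields $B_D=(1-s)(a_{n-1}-b_z)+W(b_z-a_{z-1})$, where $W$ is the total weight of the $z$ cooperating co-players.

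Next I would invoke $\phi>0$ (Proposition~\ref{necessary}) together with $b_{z+1}-a_z>0$ and $b_z-a_{z-1}>0$ from Assumption~\ref{assumptions}: each bracket is monotone in the relevant co-player weight, so for a fixed $z$ the one-sided constraints are tightened by taking the extreme admissible weight sum, which is $\tilde{w}$ for the maximizing configuration and $\hat{w}$ for the minimizing one. Carrying out the max/min over configurations and then over $z$ collapses the entrywise bounds exactly onto the quantities in \eqref{eq: rho functions}, producing four conditions: $\underline{\rho}^C\geq 0$ from $\mathbf{p}^{\infty}\leq\mathds{1}$ on the $C$-entries, $\phi\overline{\rho}^C\leq\delta$ from $\mathbf{p}^{\infty}\geq(1-\delta)\mathds{1}$ on the $C$-entries, $\phi\underline{\rho}^D\geq 1-\delta$ from $\mathbf{p}^{\infty}\geq(1-\delta)\mathds{1}$ on the $D$-entries, and $\phi\overline{\rho}^D\leq 1$ from $\mathbf{p}^{\infty}\leq\mathds{1}$ on the $D$-entries.

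I would then establish the positivity claims. Multiplying the two bounds of \eqref{l bounds complete} by $1-s>0$ shows they are exactly $\underline{\rho}^D\geq 0$ and $\underline{\rho}^C\geq 0$, so the first condition above is automatic. The decisive observation is that with $l=a_{n-1}$ the upper bound in \eqref{l bounds complete} is met with equality at $z=n-1$ (there $\hat{w}_0=0$ and $a_{n-1}=l$), hence it is never strict; the ``at least one strict inequality'' hypothesis of Theorem~\ref{Enforceable} therefore forces the lower bound to be strict, which is precisely $\underline{\rho}^D>0$. Since $\tilde{w}_z\geq\hat{w}_z$ and $b_z-a_{z-1}>0$ give $\overline{\rho}^D\geq\underline{\rho}^D$, and $\overline{\rho}^C\geq\underline{\rho}^C\geq 0$, I obtain $\overline{\rho}^D>0$ and $\overline{\rho}^C+\underline{\rho}^D>0$, as asserted.

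Finally, with $\underline{\rho}^D>0$ the remaining three inequalities read $\tfrac{1-\delta}{\underline{\rho}^D}\leq\phi\leq\min\{\tfrac{\delta}{\overline{\rho}^C},\tfrac{1}{\overline{\rho}^D}\}$, and an admissible $\phi>0$ exists if and only if this interval is nonempty. Clearing denominators in the two nonemptiness conditions $\tfrac{1-\delta}{\underline{\rho}^D}\leq\tfrac{1}{\overline{\rho}^D}$ and $\tfrac{1-\delta}{\underline{\rho}^D}\leq\tfrac{\delta}{\overline{\rho}^C}$ gives $\delta\geq\tfrac{\overline{\rho}^D-\underline{\rho}^D}{\overline{\rho}^D}$ and $\delta\geq\tfrac{\overline{\rho}^C}{\overline{\rho}^C+\underline{\rho}^D}$ respectively, whose binding value is the stated $\delta_\tau$; because $\underline{\rho}^D>0$ both fractions lie in $[0,1)$, so $\delta_\tau<1$, confirming the threshold is admissible. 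I expect the principal difficulty to lie not in any single inequality but in the bookkeeping of the extremal step---correctly matching each of the four one-sided entry constraints to $\tilde{w}$ versus $\hat{w}$ and to a max versus a min over $z$, while tracking the sign reversal in $B_C$ so that the monotonicity directions, and hence the appearance of $\overline{\rho}^C$ rather than $\underline{\rho}^C$ in the $\delta$-bound, come out correctly.
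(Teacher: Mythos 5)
Your proposal is correct and follows essentially the same route as the paper: with $p_0=1$ the entrywise constraints collapse to $\tfrac{1-\delta}{\underline{\rho}^D}\leq\phi\leq\min\bigl\{\tfrac{1}{\overline{\rho}^D},\tfrac{\delta}{\overline{\rho}^C}\bigr\}$, and nonemptiness of this interval yields exactly the two fractions whose maximum is $\delta_\tau$. The only (harmless) divergence is in the positivity claims, which you extract from the ``at least one strict inequality'' clause of Theorem~\ref{Enforceable} via equality of the upper bound at $z=n-1$, whereas the paper obtains $\rho^D(\sigma)>0$ directly as a necessity from the lower constraint $(1-\delta)\leq\phi\rho^D(\sigma)$ with $\phi>0$; both are valid and lead to the same conclusion.
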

 \subsection{Equalizer ZD strategies}
The existence of equalizer strategies with $s=0$ does not impose any requirement on the initial probability to cooperate. In general, one can identify different regions of the unit interval for $p_0$ in which different threshold discount factors exist. 
For instance, the boundary cases can be examined in a similar manner as was done for extortionate and generous strategies and, in general, will lead to different requirements on the discount factor.
In this section, we derive conditions for the discount factor such that the equalizer payoff relation can be enforced for a variable initial probability to cooperate that is within the open unit interval.
\begin{theorem}[Equalizer thresholds]\label{thm: equalizer strategies thresholds}
	Let $s=0$ and assume $l$ satisfies the bounds in Theorem \ref{Enforceable}. The equalizer payoff relation can be enforced for $p_0\in(0,1)$ if and only if the following inequalities hold
	\begin{align}
	\delta&\geq 1-\frac{\underline{\rho}^D}{\underline{\rho}^D+(\overline{\rho}^D-\underline{\rho}^D)p_0}\label{req1},\\
	\delta&\geq 1-\frac{\underline{\rho}^C}{(1-p_0)(\underline{\rho}^C+\overline{\rho}^D)}\label{req2},\\
	\delta&\geq 
	1-\frac{\underline{\rho}^C}{(1-p_0)(\overline{\rho}^C-\underline{\rho}^C)+\underline{\rho}^C}\label{req3},\\
	\delta&\geq 1-\frac{\underline{\rho}^D}{\left(\overline{\rho}^C+\underline{\rho}^D\right)p_0}\label{req4}.
	\end{align}
	\textcolor{black}{In this case, $\delta_\tau$ is determined by the maximum right-hand-side of \eqref{req1}-\eqref{req4}.} These conditions on $\delta$ also hold when $s\neq 0$ and $b_0<l<a_{n-1}$.
\end{theorem}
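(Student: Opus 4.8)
The plan is to reduce the enforceability condition of Definition~\ref{def: enforceable}---that every entry of $\delta\mathbf{p}$ from \eqref{eq:nfZD} lie in $[0,\delta]$---to a feasibility question for the single scale parameter $\phi$, and then translate that feasibility into bounds on $\delta$. First I would specialize \eqref{eq:nfZD} to $s=0$ and evaluate it componentwise on the two families of action profiles. For a profile in which the key player cooperates while $z$ co-players cooperate, the repeat entry is $1$ and the co-players split into $z$ cooperators earning $a_z$ and $n-1-z$ defectors earning $b_{z+1}$, so the weighted co-player term equals $a_z+W(b_{z+1}-a_z)$ with $W$ the total weight of the \emph{defecting} co-players; substituting gives the entry $\delta p_\sigma = 1-\phi\,[(a_z-l)+W(b_{z+1}-a_z)]-(1-\delta)p_0$. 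An analogous computation for a profile in which the key player defects yields $\delta p_\sigma = \phi\,[(l-b_z)+W'(b_z-a_{z-1})]-(1-\delta)p_0$, with $W'$ the total weight of the cooperating co-players. These bracketed quantities are exactly the per-profile summands of $\overline{\rho}^C,\underline{\rho}^C$ and $\overline{\rho}^D,\underline{\rho}^D$ in \eqref{eq: rho functions} at $s=0$.

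Next I would collapse the $2^n$ scalar constraints to four. Because $\phi>0$ by Proposition~\ref{necessary}, each cooperation entry is monotonically decreasing and each defection entry monotonically increasing in its bracket; moreover, for a fixed $z$ the weight $W$ (resp. $W'$) ranges over the attainable sums of $n-1-z$ (resp. $z$) weights, whose extremes are $\hat w$ and $\tilde w$. Hence requiring $0\le\delta p_\sigma\le\delta$ over all profiles is equivalent to four boundary inequalities, evaluated at $\overline{\rho}^C,\underline{\rho}^C,\overline{\rho}^D,\underline{\rho}^D$. Rearranging them, I obtain two lower bounds on $\phi$, namely $\phi\ge(1-\delta)(1-p_0)/\underline{\rho}^C$ and $\phi\ge(1-\delta)p_0/\underline{\rho}^D$, and two upper bounds, $\phi\le[1-(1-\delta)p_0]/\overline{\rho}^C$ and $\phi\le[\delta+(1-\delta)p_0]/\overline{\rho}^D$.

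A feasible $\phi$ then exists if and only if each lower bound does not exceed each upper bound, i.e., exactly four pairwise inequalities. I would cross-multiply each and solve for $\delta$: the $\underline{\rho}^D$-lower with the $\overline{\rho}^D$-upper reproduces \eqref{req1}, the $\underline{\rho}^C$-lower with the $\overline{\rho}^D$-upper gives \eqref{req2}, the $\underline{\rho}^C$-lower with the $\overline{\rho}^C$-upper gives \eqref{req3}, and the $\underline{\rho}^D$-lower with the $\overline{\rho}^C$-upper gives \eqref{req4}; since each is of the form $\delta\ge(\cdot)$, the minimal feasible discount factor $\delta_\tau$ is the largest of the four right-hand sides. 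Because the lower-bound numerators are strictly positive for $p_0\in(0,1)$, any admissible $\phi$ is automatically positive, so no extra condition is needed.

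The step I expect to be most delicate is the sign bookkeeping in the cross-multiplications. The bounds in Theorem~\ref{Enforceable} only guarantee $\underline{\rho}^C\ge0$ and $\underline{\rho}^D\ge0$, so I must argue that under $p_0\in(0,1)$ and $b_0<l<a_{n-1}$ these are strictly positive---otherwise a denominator vanishes. The degenerate case is in fact consistent with the claimed equivalence: if, say, $\underline{\rho}^D=0$, then the defection lower bound forces $(1-\delta)p_0\le0$, which is impossible, and correspondingly the right-hand sides of \eqref{req1} and \eqref{req4} collapse to $1$, so both enforceability and the inequalities fail together. Finally, for the extension to $s\neq0$ I would redo the componentwise evaluation keeping the $sg^i$ term: on cooperation profiles $g^i_\sigma=a_z$ and on defection profiles $g^i_\sigma=b_z$, and the bracket reorganizes into $-[(1-s)(a_z-l)+W(b_{z+1}-a_z)]$ and $(1-s)(l-b_z)+W'(b_z-a_{z-1})$ respectively. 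These are precisely the general-$s$ summands of $\overline{\rho}^C,\overline{\rho}^D$, so the entries retain the form $1-\phi\rho^C-(1-\delta)p_0$ and $\phi\rho^D-(1-\delta)p_0$, and the identical four-inequality argument applies whenever $b_0<l<a_{n-1}$.
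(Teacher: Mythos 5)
Your proposal is correct and follows essentially the same route as the paper's proof: it reduces the componentwise conditions $0\le\delta\mathbf{p}_\sigma\le\delta$ (the paper's inequalities \eqref{si=c bounds} and \eqref{si=d bounds}, which you re-derive rather than cite) to the same two-sided bounds $\frac{(1-\delta)(1-p_0)}{\underline{\rho}^C}\le\phi\le\frac{1-(1-\delta)p_0}{\overline{\rho}^C}$ and $\frac{(1-\delta)p_0}{\underline{\rho}^D}\le\phi\le\frac{\delta+(1-\delta)p_0}{\overline{\rho}^D}$, and then obtains \eqref{req1}--\eqref{req4} from the four pairwise feasibility conditions, with the correct pairings. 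Your handling of the degenerate vanishing-denominator cases and of the extension to $s\neq 0$, $b_0<l<a_{n-1}$ is likewise consistent with the paper's argument.
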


\begin{remark}
\textcolor{black}{Because the maxima and minima in \eqref{eq: rho functions} depend on the slope $s$, the expressions of the threshold discount factors for a fixed baseline payoff $l$ typically varies over the set of enforceable slopes. This is exemplified in Section \ref{sec: application to pgg}.
However, the expressions of the threshold discount factors also provide insight into why fair payoff relations $\pi^{-i}=\pi^i$ with a slope $s=1$ cannot be enforced in a repeated social dilemma with a finite expected number of rounds. 
By Assumption \ref{assumptions}b and $\hat{w}_0=0$ it follows that both $\underline{\rho}^D$ and $\underline{\rho}^C$ in \eqref{eq: rho functions} are zero when $s=1$. As a result, all expressions for $\delta_\tau$ are equal to one.}
\end{remark}

With Theorems \ref{prop: extortionate thresholds}, \ref{prop: generous thresholds}, and \ref{thm: equalizer strategies thresholds}, we have provided expressions for deriving the minimum discount factor for some desired linear payoff relation. 
Because the expressions depend on the `single-round payoff of the multiplayer game, in general they will differ between social dilemmas. 
In order to determine the thresholds, one needs to find the global extrema of a function over $z$ that, as we will show in the next section, can be efficiently done for a many social dilemma games.
\textcolor{black}{Essentially, the obtained threshold discount factors ensure that a suitable $\phi>0$ exists for which the ZD strategy is well-defined. 
Section \ref{sec: proofs of the main results} contains a detailed derivation of the thresholds that also indicates how to set $\phi$ to fully define the ZD strategy in terms of the game parameters and the desired enforceable payoff relation. }

\section{Applications to multiplayer social dilemmas}\label{sec: application to pgg}

 
\textcolor{black}{In this section the above theory is applied to the linear public goods game of Example \ref{example: PGG} and the multiplayer snowdrift game of Example \ref{example: NSD} to illustrate the role of the discount factor on the set of enforceable slopes $s$ for generous and extortionate strategies and subsequently, the Nash equilibria of the repeated game. Characterizing this effect is important also because the slope 
determines the correlation between the payoffs and thus also the degree to which cooperative actions of opponents can be incentivised \cite{stewart2013extortion} within a finite expected number of rounds.
 The weights are assumed to be equal, that is $w_j=\frac{1}{n-1}$ for all $j\neq i$. 
 In this case, the conditions for existence and the thresholds become relatively easy to obtain. 
All the proofs of this section are found in the Appendices.}
 We first apply Theorem \ref{Enforceable} to the public goods game to characterize the enforceable slopes and baseline payoffs. 
 
 \begin{proposition}[Enforceable slopes in the public goods game]\label{prop: ex. ext. pgg}
Suppose $p_0=0$, $l=0$ and $0<s<1$, so that the ZD strategy is extortionate. For the public goods game with discounting and $r>1$, every slope $s\geq \frac{r-1}{r}$ can be enforced independent of $n$. If $s< \frac{r-1}{r}$, the slope can be enforced if and only if
$$n\leq\frac{r(1-s)}{r(1-s)-1}.$$
Generous strategies with $p_0=1$ and $l=rc-c$ have the same set of enforceable slopes.
\end{proposition}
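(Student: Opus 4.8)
The plan is to specialize Theorem~\ref{Enforceable} to the public goods game of Example~\ref{example: PGG} with equal weights $w_j=\frac{1}{n-1}$, so that $\hat{w}_z=\frac{z}{n-1}$, and to show that both the extortionate and the generous bounds in \eqref{l bounds complete} collapse to a single sign condition. First I would record the two payoff increments appearing in \eqref{l bounds complete}: substituting $a_z=\frac{rc(z+1)}{n}-c$ and $b_z=\frac{rcz}{n}$ yields the clean identities $b_z-a_{z-1}=c$ (for $z\geq1$) and $b_{z+1}-a_z=c$ (for $z\leq n-2$); the two boundary cases involve the conventions $a_{-1}=b_n=0$ but are harmless because they are multiplied by $\hat{w}_0=0$. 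This constancy of the increments is the structural fact that makes the extrema in \eqref{l bounds complete} easy to evaluate.

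Next, for the extortionate strategy I would set $l=b_0=0$ and simplify. The lower bound becomes $\max_{z}\{zc\,B\}\leq 0$ with $B:=\frac{r}{n}-\frac{1}{(n-1)(1-s)}$, a quantity linear in $z$ through the origin; its maximum over $z\in\{0,\dots,n-1\}$ is $0$ exactly when $B\leq 0$ and positive otherwise, so the lower bound is equivalent to $B\leq 0$. For the upper bound I would note that $a_z+\frac{\hat{w}_{n-z-1}c}{1-s}$ is affine in $z$ with slope $cB$, so its minimum is attained at an endpoint, equal to $c(r-1)$ at $z=n-1$ (when $B<0$) and to $c\big(\frac{r}{n}-1+\frac{1}{1-s}\big)$ at $z=0$ (when $B\geq0$); since $1<r<n$ and $0<s<1$ force $\frac{1}{1-s}>1>1-\frac{r}{n}$, both values are strictly positive. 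Hence the upper inequality holds strictly for every admissible $(s,n)$, which both shows it is never binding and supplies the required strict inequality.

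The remaining algebraic step is to convert $B\leq 0$ into a condition on $n$. Clearing denominators gives $r(n-1)(1-s)\leq n$, i.e. $n\,[\,r(1-s)-1\,]\leq r(1-s)$, and the main obstacle is keeping the direction of this inequality straight across the sign of $r(1-s)-1$. When $s\geq\frac{r-1}{r}$ this factor is nonpositive, so the left side is a nonpositive multiple of $n$ and can never exceed the positive number $r(1-s)$; the inequality then holds for all $n$, giving the first claim. When $s<\frac{r-1}{r}$ the factor is positive and dividing preserves the direction, yielding precisely $n\leq\frac{r(1-s)}{r(1-s)-1}$.

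Finally, for the generous strategy I would repeat the substitution with $l=a_{n-1}=rc-c=c(r-1)$. By the same constancy of increments the lower bound reads $\max_z\{zc\,B\}\leq c(r-1)$, which holds for all admissible $(s,n)$ because $c(r-1)>0$, and does so strictly, supplying the required strict inequality on this side. The upper bound $\min_z\{a_z+\frac{\hat{w}_{n-z-1}c}{1-s}\}\geq c(r-1)$ reduces, via the same endpoint analysis (its $z=n-1$ value is exactly $c(r-1)$, with equality when $B<0$), once more to $B\leq 0$. Thus the generous case is governed by the identical condition $B\leq 0$, so its set of enforceable slopes coincides with the extortionate one, which is the final assertion of the proposition.
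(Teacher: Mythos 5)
Your proposal is correct and follows essentially the same route as the paper's Appendix~A proof: specializing Theorem~\ref{Enforceable} with equal weights, exploiting the constant payoff increments $b_z-a_{z-1}=b_{z+1}-a_z=c$ so the bounds reduce to endpoint evaluations (the paper writes these endpoint values directly in its analogues of your bounds), and reducing both the extortionate and generous cases to the single condition $n\bigl(r(1-s)-1\bigr)\leq r(1-s)$ with the same case split on the sign of $r(1-s)-1$. Your packaging of the condition as $B\leq 0$ and your explicit verification of the strict-inequality requirement are slightly tidier than the paper's presentation, but the argument is the same.
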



\textcolor{black}{Extortionate strategies in the public goods game (and the multiplayer snowdrift game) satisfy $l=b_0=0$ and thus the enforced linear payoff relation simply becomes $\pi^{-i}=s\pi^i$. Slopes close to one thus imply $\pi^{-i}$ and $\pi^i$ are approximately equal, while slopes close to zero imply a high level of extortion that allows the strategic player to do better than the average of his/her co-players.
From Proposition \ref{prop: ex. ext. pgg} it follows that in the public goods game the lower bound on enforceable slope is $s\geq 1-\frac{n}{r(n-1)}$.
Both $n$ and $r$ thus determine how much better the strategic player can do than the average of his/her co-players.  
Because full cooperation leads to the highest single-round average group payoff an opposite argument can be made for generous strategies: low values of $s$ ensure the average payoff of the co-players is close to optimal.}

\textcolor{black}{
Just like the set of enforceable slopes, also the threshold discount factors for generous and extortionate strategies are the same in the public goods game and are characterized in the following proposition.}
\begin{proposition}[Thresholds for extortion and generosity]\label{prop: thresholds pgg}
 For the enforceable slopes $s\geq 1-\frac{n}{r(n-1)}$, in the public goods game the threshold discount factor for extortionate and generous strategies is determined as
  \begin{equation}\label{eq: threshold pgg}
     \delta_\tau=\frac{1-(1-s)(r-\frac{r}{n})}{1-(1-s)(1-\frac{r}{n})}.
 \end{equation}
\end{proposition}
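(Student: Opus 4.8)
The plan is to specialize the general threshold formulas from Theorem~\ref{prop: extortionate thresholds} and Theorem~\ref{prop: generous thresholds} to the public goods game with equal weights $w_j=\frac{1}{n-1}$, for which $\hat{w}_z=\tilde{w}_z=\frac{z}{n-1}$. The first fact I would record is that the public-goods payoffs make the relevant payoff gaps constant: a direct substitution gives $b_{z+1}-a_z=c$ and $b_z-a_{z-1}=c$ for every $z$. This is the structural observation that makes the whole computation tractable, because it turns each of the four extremal quantities $\overline{\rho}^C,\underline{\rho}^C,\overline{\rho}^D,\underline{\rho}^D$ in~\eqref{eq: rho functions} into the extremum of a function that is \emph{affine} in $z$.

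Second, I would substitute $l=b_0=0$ (the extortionate case) and evaluate these affine functions. For instance, $\overline{\rho}^C$ becomes the maximum over $z$ of $(1-s)a_z+\frac{n-1-z}{n-1}c$, whose coefficient of $z$ equals $c\left[(1-s)\frac{r}{n}-\frac{1}{n-1}\right]$. The key step here is to use the enforceable-slope hypothesis $s\geq 1-\frac{n}{r(n-1)}$, which is exactly equivalent to $(1-s)\frac{r}{n}\leq\frac{1}{n-1}$, to sign this coefficient. This pins down whether each affine function is increasing or decreasing in $z$ and therefore places every extremum at an endpoint $z=0$ or $z=n-1$. Evaluating at the endpoints (noting that the boundary conventions $a_{-1}=b_n=0$ never enter, since their coefficients $\hat{w}_0=\tilde{w}_0=0$ vanish) yields the closed forms $\overline{\rho}^C=c[(1-s)\frac{r}{n}+s]$, $\underline{\rho}^C=(1-s)c(r-1)$, $\overline{\rho}^D=c[1-(1-s)\frac{r(n-1)}{n}]$, and $\underline{\rho}^D=0$.

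Third, the main obstacle — and the step that collapses the maximum in Theorem~\ref{prop: extortionate thresholds} to a single clean expression — is to verify the algebraic identity $\overline{\rho}^C=\overline{\rho}^D+\underline{\rho}^C$. I would check it by direct expansion: the $(1-s)$ terms combine as $-(1-s)\frac{r(n-1)}{n}+(1-s)(r-1)=(1-s)(\frac{r}{n}-1)$, which recovers $\overline{\rho}^C$. Granting the identity, both arguments of the max coincide, since $\overline{\rho}^C-\underline{\rho}^C=\overline{\rho}^D$ gives $\frac{\overline{\rho}^C-\underline{\rho}^C}{\overline{\rho}^C}=\frac{\overline{\rho}^D}{\overline{\rho}^C}$, while $\overline{\rho}^D+\underline{\rho}^C=\overline{\rho}^C$ gives $\frac{\overline{\rho}^D}{\overline{\rho}^D+\underline{\rho}^C}=\frac{\overline{\rho}^D}{\overline{\rho}^C}$. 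Hence $\delta_\tau=\overline{\rho}^D/\overline{\rho}^C$, and I would finish by recognizing that the numerator $\overline{\rho}^D/c=1-(1-s)(r-\frac{r}{n})$ and the denominator $\overline{\rho}^C/c=s+(1-s)\frac{r}{n}=1-(1-s)(1-\frac{r}{n})$ are precisely the numerator and denominator of~\eqref{eq: threshold pgg}. The positivity conditions $\overline{\rho}^C>0$ and $\overline{\rho}^D+\underline{\rho}^C>0$ required by the theorem both reduce to $\overline{\rho}^C>0$, which is immediate from $c>0$, $0<s<1$, $r>1$.

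Finally, for the generous case I would exploit symmetry rather than redo the algebra: substituting $l=a_{n-1}=c(r-1)$ and repeating the endpoint analysis shows that the four quantities are obtained from the extortionate ones by interchanging the roles of $C$ and $D$, namely $\overline{\rho}^C=c[1-(1-s)\frac{r(n-1)}{n}]$, $\underline{\rho}^C=0$, $\overline{\rho}^D=c[(1-s)\frac{r}{n}+s]$, $\underline{\rho}^D=(1-s)c(r-1)$. The dual identity $\overline{\rho}^D=\overline{\rho}^C+\underline{\rho}^D$ again collapses the maximum in Theorem~\ref{prop: generous thresholds} to $\delta_\tau=\overline{\rho}^C/\overline{\rho}^D$, which is the same ratio and hence the same expression~\eqref{eq: threshold pgg}. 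I expect the only delicate points to be keeping the sign of the affine coefficient correct across all four functions and confirming that the endpoint conventions contribute nothing, both of which are controlled by the enforceable-slope bound and by the vanishing of $\hat{w}_0$ and $\tilde{w}_0$.
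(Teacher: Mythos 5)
Your proposal is correct and follows essentially the same route as the paper's proof: specialize the $\rho$-functions of \eqref{eq: rho functions} to the public goods game with equal weights, use affinity in $z$ together with the slope bound $s\geq 1-\frac{n}{r(n-1)}$ to place all four extrema at the endpoints $z=0$ and $z=n-1$, observe that both fractions in Theorems \ref{prop: extortionate thresholds} and \ref{prop: generous thresholds} coincide, and handle the generous case via the $C\leftrightarrow D$ swap $\overline{\rho}_g^C=\overline{\rho}_e^D$, $\overline{\rho}_g^D=\overline{\rho}_e^C$, $\underline{\rho}_g^D=\underline{\rho}_e^C$, $\underline{\rho}_g^C=\underline{\rho}_e^D$. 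Your only addition is making explicit the identity $\overline{\rho}^C=\overline{\rho}^D+\underline{\rho}^C$, which the paper uses implicitly when it equates the two fractions.
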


 \begin{figure}
 \centering
 \includegraphics[width=0.8\linewidth]{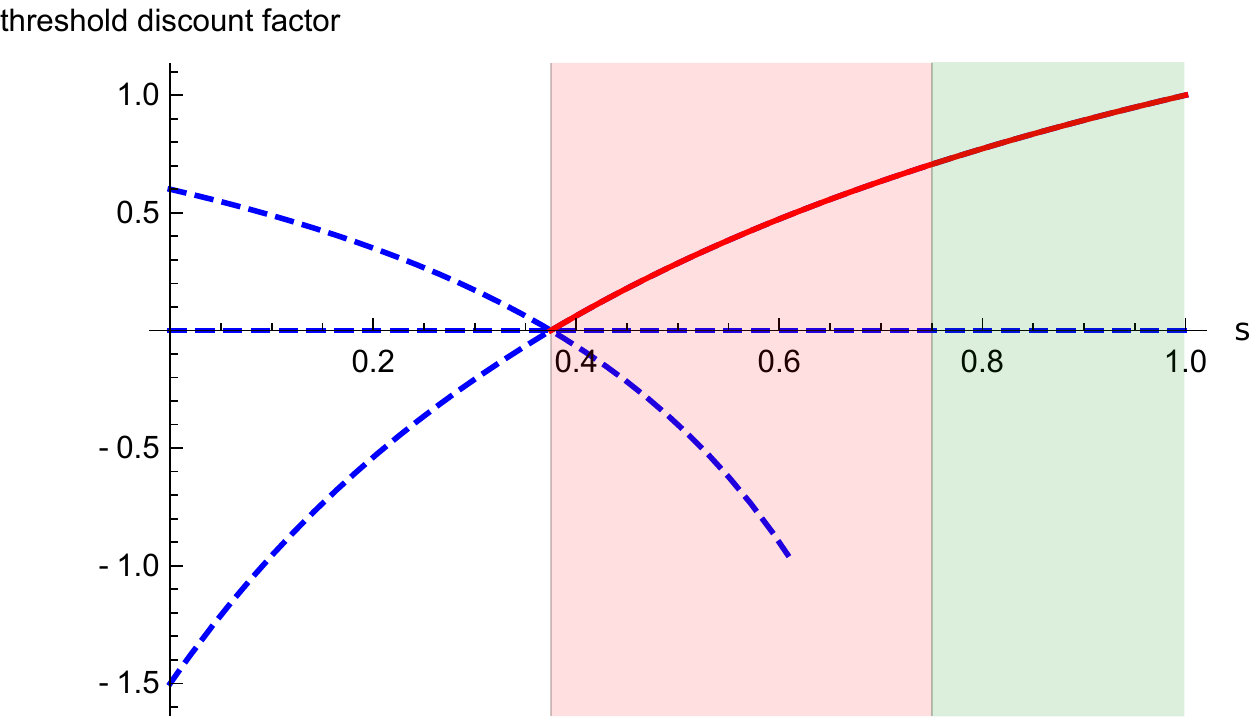}
\caption{\textcolor{black}{The red curve shows threshold discount factors for generous and extortionate strategies in the public goods game with $c=1,r=2, n=5$. Extortionate and generous strategies exist from $s=\frac{3}{8}$. In the region $\frac{3}{8}\leq s\leq\frac{3}{4}$, extortionate ZD strategies are a symmetric Nash equilibrium, while in the region $\frac{3}{4}\leq s<1$ generous ZD strategies are a symmetric Nash equilibrium. } }
 \label{fig: Thresholds generous and extortionate PGG}
\end{figure}
\begin{figure}
    \centering
 \includegraphics[width=0.8\linewidth]{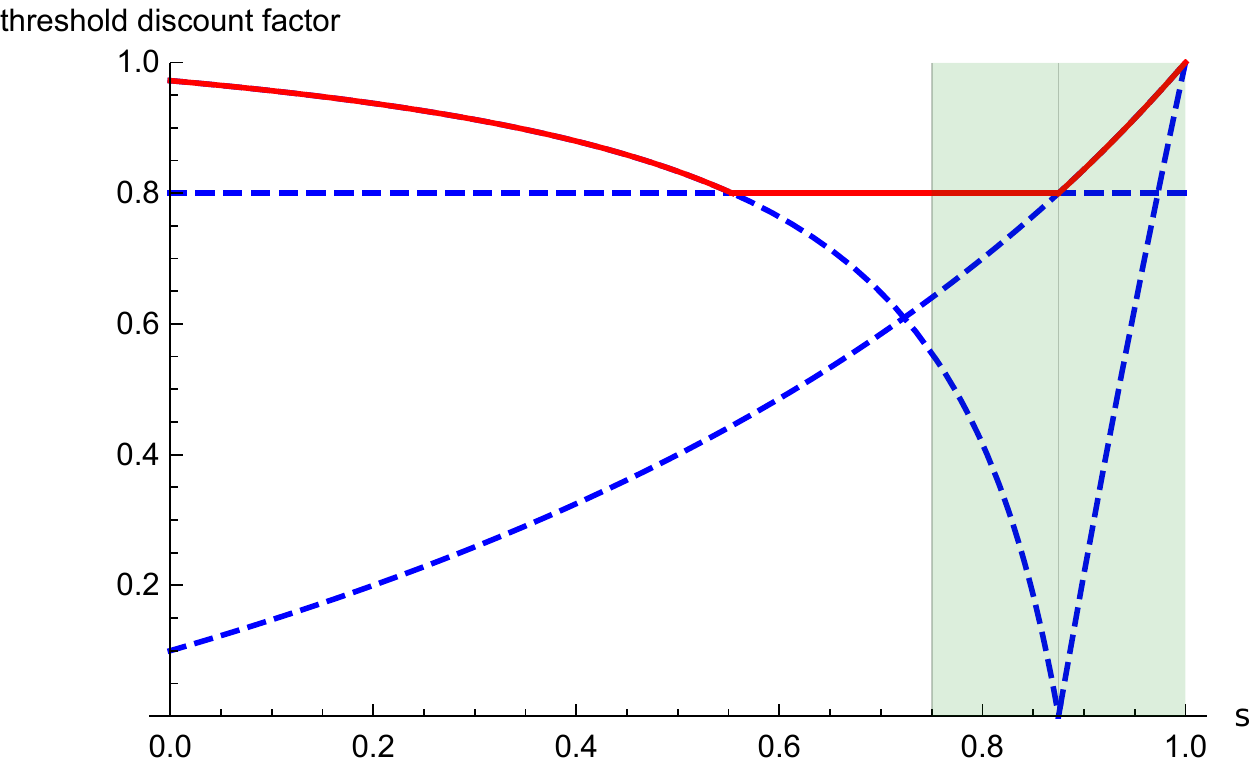}
    \caption{\textcolor{black}{The red curve shows threshold discount factors for generous strategies in the multiplayer snowdrift game with $b=2$, $c=1$ and $n=5$. Extortionate ZD strategies can enforce slopes from $s\geq \frac{7}{8}$ with the same discount factor as generous strategies, however only the latter are a Nash equilibrium, as indicated by the region $\frac{3}{4}\leq s<1 $.}}
    \label{fig:my_label}
\end{figure}

  \color{black}  
One can notice that when $s=1$, the threshold discount factor in \eqref{eq: threshold pgg} evaluates as $\delta_\tau=1$. 
This is consistent with Theorem \ref{Enforceable} and illustrates that fair strategies can only be enforced when the expected number of rounds is infinite (see Figure \ref{fig: Thresholds generous and extortionate PGG} for a numerical example). 
\textcolor{black}{From Propositions \ref{prop: ex. ext. pgg} and \ref{prop: thresholds pgg} one can also obtain insight in the effect of $\delta$ on the Nash equilibrium of the repeated public goods game. 
The result in \cite[SI Proposition 3]{hilbe2014cooperation} ensures that extortionate strategies are a symmetric Nash equilibrium if and only if $s<\frac{n-2}{n-1}$}, while generous strategies are a symmetric Nash equilibrium if and only if $s>\frac{n-2}{n-1}$. 
At $s=\frac{n-2}{n-1}$ both types of ZD strategies are a Nash equilibrium.
Combining this with the lower-bound of enforceable slopes and the expression for the threshold discount factor it follows that extortionate ZD strategies are a Nash equilibrium in the repeated public goods game from any discount factor $\delta>0$ provided that the slope is sufficiently small.
On the other hand, generous ZD strategies can only be a Nash equilibrium in the public goods game when the discount factor is sufficiently high $\delta\geq\frac{(n-r)(n-1)}{(n-1)^2+(r-1)}$. 
These conditions also hold when the deviating player is not restricted to ZD strategies and thus provide rather general equilibrium conditions for the repeated public goods game with discounting.

Let us now investigate the multiplayer snowdrift game in which the cost of cooperation is shared by all cooperators, resulting in a nonlinear payoff function with respect to the number of cooperating co-players $z$. 
The following proposition characterizes the enforceable payoff relations of generous and extortionate strategies by applying Theorem \ref{Enforceable} to the payoffs in Example \ref{example: NSD}.

\begin{proposition}[Enforceable slopes in the multiplayer snowdrift game]\label{prop: ex. ext. nsd}
	Suppose $p_0=0$, $l=0$ and $0<s<1$. 
	For the multiplayer snowdrift game with $b>c>0$, extortionate strategies can enforce any $s\geq 1-\frac{c}{b(n-1)}$. 
	Generous strategies, with $p_0=1$ and $l=b-\frac{c}{n}$, can enforce any $0<s<1$ independent of $n$.
\end{proposition}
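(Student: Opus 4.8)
The plan is to substitute the multiplayer snowdrift payoffs $a_z=b-\frac{c}{z+1}$, $b_z=b$ for $z\geq 1$, $b_0=0$, together with the convention $a_{-1}=b_n=0$ and the equal weights $w_j=\frac{1}{n-1}$ (so that $\hat{w}_z=\frac{z}{n-1}$), directly into the characterization of Theorem \ref{Enforceable}. The condition $-\frac{1}{n-1}<s<1$ is immediate from $0<s<1$, so the work lies in evaluating the two bounds in \eqref{l bounds complete}. The key simplification I would exploit is that $b_z-a_{z-1}=\frac{c}{z}$ for every $z\geq 1$, whence $\hat{w}_z(b_z-a_{z-1})=\frac{z}{n-1}\cdot\frac{c}{z}=\frac{c}{n-1}$ is \emph{independent of $z$}. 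Therefore the first (maximum) bound collapses: the bracketed term $b_z-\frac{\hat{w}_z(b_z-a_{z-1})}{1-s}$ equals $0$ at $z=0$ and equals $b-\frac{c}{(n-1)(1-s)}$ for all $z\geq 1$, so the whole maximum is simply $\max\bigl\{0,\,b-\frac{c}{(n-1)(1-s)}\bigr\}$.

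For the extortionate case ($l=b_0=0$), this shows the first bound $\max\{\cdots\}\leq l$ holds if and only if $b-\frac{c}{(n-1)(1-s)}\leq 0$, which rearranges to $(1-s)b(n-1)\leq c$, i.e. $s\geq 1-\frac{c}{b(n-1)}$; this recovers the claimed threshold. It then remains to check that the second bound $\min\{\cdots\}\geq l=0$ holds automatically whenever the extortion constraint is met. The relevant increments are $b_{z+1}-a_z=\frac{c}{z+1}$ for $0\leq z\leq n-2$ and $b_n-a_{n-1}=-(b-\frac{c}{n})$ at $z=n-1$. For the terms with $z\leq n-2$ I would substitute the constraint $\frac{1}{1-s}\geq\frac{b(n-1)}{c}$ into $a_z+\frac{\hat{w}_{n-z-1}(b_{z+1}-a_z)}{1-s}$ and check the result strictly exceeds $b>0$; the remaining term at $z=n-1$ reduces to $a_{n-1}=b-\frac{c}{n}>0$ because $\hat{w}_0=0$ kills the correction. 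Thus the second bound holds strictly, which also supplies the ``at least one strict inequality'' required by Theorem \ref{Enforceable}.

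For the generous case ($l=a_{n-1}=b-\frac{c}{n}$), I would verify both bounds hold for every $0<s<1$ and every $n$. The collapsed maximum satisfies $\max\{0,\,b-\frac{c}{(n-1)(1-s)}\}\leq b-\frac{c}{n}$ since $0<b-\frac{c}{n}$ and since $b-\frac{c}{(n-1)(1-s)}\leq b-\frac{c}{n}$ is equivalent to $(n-1)(1-s)\leq n$, which is immediate from $1-s<1<\frac{n}{n-1}$. For the second (minimum) bound, the term at $z=n-1$ equals $a_{n-1}=l$ exactly, so the minimum is at most $l$; I would then show each term with $0\leq z\leq n-2$ is at least $l$, which after cancelling $b$ and dividing by the positive factor $\frac{c(n-z-1)}{z+1}$ again reduces to the always-true inequality $(n-1)(1-s)\leq n$. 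Hence the minimum equals $l$ (attained at $z=n-1$) while the maximum is strictly below $l$, so the relation is enforceable for all $0<s<1$, independently of $n$.

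The one genuinely delicate point, and where I would concentrate the care, is the minimum bound: unlike the maximum, the product $\hat{w}_{n-z-1}(b_{z+1}-a_z)$ does \emph{not} telescope to a constant, so the minimand must be analysed as a genuine function of $z$, and the boundary index $z=n-1$---where the convention $b_n=0$ flips the sign of the increment and the weight $\hat{w}_0=0$ removes the correction term---must be separated out and identified as the active minimizer. Everything else is a direct substitution into the already-established Theorem \ref{Enforceable}.
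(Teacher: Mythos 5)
Your proposal is correct and takes essentially the same route as the paper: the paper likewise substitutes the snowdrift payoffs with equal weights into Theorem~\ref{Enforceable}, uses the collapse $\hat{w}_z(b_z-a_{z-1})=\frac{c}{n-1}$ to reduce the lower bound to $\max\left\{0,\,b-\frac{c}{(n-1)(1-s)}\right\}$, and shows the upper-bound minimand is minimized at $z=n-1$ (where $\hat{w}_0=0$ kills the correction term), yielding $l\leq b-\frac{c}{n}$ — this is exactly Lemma~\ref{lem: bounds nsd}, after which the paper substitutes $l=0$ and $l=b-\frac{c}{n}$ just as you do. Your explicit bookkeeping of the ``at least one strict inequality'' requirement at the boundary slope $s=1-\frac{c}{b(n-1)}$ is, if anything, slightly more careful than the paper's brief concluding substitution step.
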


The possibilities for extortion are thus limited by the payoff parameters $b$, $c$ and the group size $n$. 
In fact, the lower-bound on the enforceable slopes of extortionate strategies in Proposition \ref{prop: ex. ext. nsd} prevents extortionate strategies from being a Nash equilibrium in the multiplayer snowdrift game.
In contrast, generous strategies can enforce any slope \textit{independent} of the game parameters. 
However, the threshold discount factors of these strategies \textit{do} depend on the game parameters as is characterized in the following proposition.

\begin{proposition}[]\label{prop: thresholds generosity nsd}
	For the multiplayer snowdrift game with $b>c$ and $n\geq2$, for slopes $s\leq 1-\frac{c}{b(n-1)}$ the threshold discount factor for generous strategies is determined by
	\begin{equation}\label{eq: threshold nsd gen}
	\delta_\tau=  \textup{max}\left\{\frac{n-1}{n},\frac{(1-s)b-\frac{c}{n-1}}{(1-s)(b-\frac{c}{n})}\right\}.
	\end{equation}
	For higher slopes $s> 1-\frac{c}{b(n-1)}$ the threshold of generous strategies is determined by 	\begin{equation}\label{eq: thresholds ext nsd}
	\delta_\tau=   \frac{(1 - s)(\frac{c}{n}-c)+c}{(1 - s)(b - c)+c}. 
 	\end{equation}
 	The threshold discount factor of enforceable slopes of extortionate strategies are also given by \eqref{eq: thresholds ext nsd}.
\end{proposition}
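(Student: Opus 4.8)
The plan is to specialize the general threshold formulas of Theorem~\ref{prop: generous thresholds} and Theorem~\ref{prop: extortionate thresholds} to the snowdrift payoffs of Example~\ref{example: NSD} with equal weights, for which $\hat{w}_z=\tilde{w}_z=\tfrac{z}{n-1}$. Inserting $a_z=b-\tfrac{c}{z+1}$, $b_z=b$ for $z\geq 1$, $b_0=0$, together with the conventions $a_{-1}=b_n=0$, into the four extrema of \eqref{eq: rho functions} reduces the entire statement to two tasks: locating the global maximum and minimum of $\rho^C$ and $\rho^D$ over $z\in\{0,\dots,n-1\}$, and then deciding which of the two candidates inside each $\max$ is active on a given slope regime. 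A useful structural observation is that, after the substitution $k=z+1$, each $\rho^C$ appears as an affine function of $1/k$ with positive coefficient $\tfrac{c(1+s(n-1))}{n-1}$, hence strictly decreasing in $k$ on $\{1,\dots,n-1\}$; the index $z=n-1$ (where $b_n=0$) must be treated separately.

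First I would analyse generous strategies, for which $l=a_{n-1}=b-\tfrac{c}{n}$. Here $\rho^D$ is piecewise constant, with $\rho^D(0)=(1-s)(b-\tfrac{c}{n})$ and $\rho^D(z)=\tfrac{c}{n-1}-\tfrac{(1-s)c}{n}$ for every $z\geq 1$; their difference equals $(1-s)b-\tfrac{c}{n-1}$, and its sign is exactly what produces the two regimes $s\le 1-\tfrac{c}{b(n-1)}$ and $s>1-\tfrac{c}{b(n-1)}$ and selects whether $\overline{\rho}^D$ sits at $z=0$ or at $z\geq 1$. The monotonicity above places $\overline{\rho}^C$ at $z=0$, giving $\overline{\rho}^C=\tfrac{c(1+s(n-1))}{n}$. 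In the regime $s\le 1-\tfrac{c}{b(n-1)}$ I would use the identity $\overline{\rho}^C=(n-1)\underline{\rho}^D$ to collapse $\tfrac{\overline{\rho}^C}{\overline{\rho}^C+\underline{\rho}^D}$ to $\tfrac{n-1}{n}$, while the other candidate simplifies to $\tfrac{(1-s)b-c/(n-1)}{(1-s)(b-c/n)}$, recovering \eqref{eq: threshold nsd gen}. In the complementary regime the roles of $\overline{\rho}^D$ and $\underline{\rho}^D$ swap, and I would show the second candidate dominates by checking that $\underline{\rho}^D\big(\overline{\rho}^C+\underline{\rho}^D-\overline{\rho}^D\big)\geq 0$; since $\overline{\rho}^C+\underline{\rho}^D-\overline{\rho}^D=c\tfrac{n-2}{n-1}+(1-s)\big(b-c\tfrac{n-1}{n}\big)\geq 0$, this yields \eqref{eq: thresholds ext nsd}.

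For extortionate strategies I would set $l=b_0=0$ and recompute the extrema on the enforceable range $s\geq 1-\tfrac{c}{b(n-1)}$. One obtains $\overline{\rho}^D=\tfrac{c}{n-1}-(1-s)b\geq 0$ and $\underline{\rho}^D=0$ (both from comparing $\rho^D(0)=0$ with $\rho^D(z\geq 1)$), and $\overline{\rho}^C=(1-s)b+sc$ at $z=0$. The minimum of $\rho^C$ now lands at the boundary index $z=n-1$, giving $\underline{\rho}^C=(1-s)(b-\tfrac{c}{n})$; this is because $\rho^C(n-2)-\rho^C(n-1)=\tfrac{c(1+s(n-1))}{n(n-1)^2}>0$, so the interior endpoint $z=n-2$ is beaten by $z=n-1$. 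The first candidate of the extortion threshold then reduces to $\tfrac{\overline{\rho}^C-\underline{\rho}^C}{\overline{\rho}^C}=\tfrac{c(1+s(n-1))/n}{(1-s)b+sc}$, which is exactly \eqref{eq: thresholds ext nsd}, and a final sign check $\underline{\rho}^C\big(\overline{\rho}^C-\underline{\rho}^C-\overline{\rho}^D\big)\geq 0$ — where the bracket is affine and decreasing in $s$ with nonnegative value $c\tfrac{n-2}{n-1}$ at $s=1$ — confirms this candidate is the active one.

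The step I expect to be most delicate is pinning down the global extrema of $\rho^C$ and $\rho^D$ over the discrete index $z$: because the snowdrift payoff $a_z=b-\tfrac{c}{z+1}$ is nonlinear, the extrema are not uniformly attained at fixed endpoints, and the correct minimum of $\rho^C$ in the extortionate case sits at the special index $z=n-1$ induced by the convention $b_n=0$ rather than at $z=n-2$; overlooking this index would give the wrong $\underline{\rho}^C$ and hence the wrong threshold. Handling the conventions $a_{-1}=b_n=0$ consistently and verifying the correct $\max$/$\min$ selection in each slope regime are therefore the points requiring the most care, whereas once the extrema are identified the remaining manipulations are routine algebraic simplifications.
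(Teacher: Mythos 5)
Your proposal is correct and follows essentially the same route as the paper's appendix proof: specialize the extrema of $\rho^C,\rho^D$ from \eqref{eq: rho functions} to the snowdrift payoffs with equal weights, identify the regime switch at $s=1-\frac{c}{b(n-1)}$ from the sign of $\rho^D(0)-\rho^D(z\geq 1)$, and substitute the resulting $\overline{\rho}^C,\underline{\rho}^C,\overline{\rho}^D,\underline{\rho}^D$ into the fractions of Theorems \ref{prop: generous thresholds} and \ref{prop: extortionate thresholds} (your identity $\overline{\rho}^C=(n-1)\underline{\rho}^D$ is exactly how the paper's $\tfrac{n-1}{n}$ term arises). Your explicit dominance checks $\underline{\rho}^D\bigl(\overline{\rho}^C+\underline{\rho}^D-\overline{\rho}^D\bigr)\geq 0$ and $\underline{\rho}^C\bigl(\overline{\rho}^C-\underline{\rho}^C-\overline{\rho}^D\bigr)\geq 0$ are a welcome addition, since they rigorously justify which candidate attains the maximum in \eqref{eq: thresholds ext nsd} — a selection the paper asserts without verification.
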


Note that again the expression of the threshold discount factor for high slopes of both generous and extortionate strategies in \eqref{eq: thresholds ext nsd} becomes one when $s=1$. 
Because generous strategies have no restriction on the enforceable slopes, they can also be a Nash equilibrium provided that the slope and discount factor are large enough (see Figure \ref{fig:my_label} for a numerical example). 

In both the public goods game and the multiplayer snowdrift game the set of enforceable payoff relations, and thus the degree of extortion and generosity, are strongly influenced by the game parameters and the discount factor. 
Up to now, we focused on deriving the minimum expected number of rounds that are necessary to enforce some desired or given payoff relation. 
However, the examples in this section also illustrate the reverse problem: given an expected number of rounds or discount factor, what is the set of payoff relations that a single player can exert? And does the employed strategy constitute a Nash equilibrium? 
Here, we have answered these question for two well known multiplayer games but using our theoretical results many other games including the prisoner's dilemma, the volunteer's dilemma, and the multiplayer stag-hunt game, can be analysed in the same manner.

\color{black}
\section{Proofs of the main results}\label{sec: proofs of the main results}

\textcolor{black}{This section provides detailed proofs of the main results in Sections \ref{main results} and \ref{Threshold}.}

\subsection{Proof of Proposition \ref{necessary}}
Suppose all players are cooperating e.g. $\sigma=(C,C,\dots,C)$. Then from the definition of $\delta\mathbf{p}$ in equation \eqref{ZDrewritten} and the payoffs given in Table \ref{multiplayer payoff}, it follows that
\begin{equation}\label{allc}
    \delta\mathbf{p}_{(C,C,\dots,C)} = 1+\phi(1-s)(l-a_{n-1})-(1-\delta)p_0.
\end{equation}
Now suppose that all players are defecting. Similarly, we have
\begin{equation}\label{alld}
    \delta\mathbf{p}_{(D,D,\dots,D)} = \phi(1-s)(l-b_0)-(1-\delta)p_0.
\end{equation}
In order for these payoff relations to be enforceable, it needs to hold that both entries in equations \eqref{allc} and \eqref{alld} are in the interval $[0,\delta]$. Equivalently,
\begin{equation}\label{allcneq}
    (1-\delta)(1-p_0)\leq\phi(1-s)(a_{n-1}-l)\leq 1-(1-\delta)p_0,
\end{equation}
and
 \begin{equation}\label{alldneq}
     0\leq p_0(1-\delta)\leq\phi(1-s)(l-b_0)\leq\delta+(1-\delta)p_0
 \end{equation}
 Combining \eqref{allcneq} and \eqref{alldneq} it follows that
$0<(1-\delta)\leq\phi(1-s)(a_{n-1}-b_0).$
From the assumption that $a_{n-1}>b_0$ listed in Assumption \ref{assumptions}, it follows that 
 \begin{equation}\label{1-sstrict}
     0<\phi(1-s).
 \end{equation}
 
 Now suppose there is a single defecting player, i.e., $\sigma=(C,C,\dots,D)$ or any of its permutations. In this case, the entries of the memory-one strategy are as given in equation \eqref{sigmaentries}. Again, for both cases we require  $\delta\mathbf{p}_{\sigma}$ to be in the interval $[0,\delta]$. This results in the inequalities given in equations \eqref{CDi} and \eqref{CDj}.
\begin{figure*}
\begin{equation}\label{sigmaentries}
\begin{aligned}
    \delta\mathbf{p}_{\sigma}=\begin{cases}
    1+\phi[sa_{n-2}-(1-w_j)a_{n-2}-w_jb_{n-1}+(1-s)l]-(1-\delta)p_0, \ &\text{if defector is $j\neq i$};\\
    \phi[sb_{n-1}-a_{n-2}+(1-s)l]-(1-\delta)p_0, \ &\text{if defector is $i$.}
    \end{cases}
    \end{aligned}
\end{equation}
\begin{equation}\label{CDi}
    0\leq p_0(1-\delta)\leq\phi[sb_{n-1}-a_{n-2}+(1-s)l]\leq\delta+(1-\delta)p_0
\end{equation}
\begin{equation}\label{CDj}
(1-\delta)(1-p_0)\leq\phi[-sa_{n-2}+(1-w_j)a_{n-2}+w_jb_{n-1}-(1-s)l]\leq 1-p_0(1-\delta)
\end{equation}
\end{figure*}
By combining the equations \eqref{CDi} and \eqref{CDj} we obtain
\begin{equation}
   0< (1-\delta)\leq\phi(s+w_j)(b_{n-1}-a_{n-2}).
\end{equation}
Again, because of the assumption $b_{z+1}>a_z$ it follows that
\begin{equation}\label{swjstrict}
    0<\phi(s+w_j),\ \forall j\neq i.
\end{equation}
The inequalities \eqref{swjstrict} and \eqref{1-sstrict} together imply that
\begin{equation}\label{1+w_jstrict}
    0<\phi(1+w_j), \forall j\neq i.
\end{equation}
Because at least one $w_j>0$, it follows that 
\begin{equation}\label{phi>0}
    \phi>0.
\end{equation}
Combining with equation \eqref{1-sstrict} we obtain
\begin{equation}\label{s<1}
  s<1. 
\end{equation}
In combination with equation \eqref{1+w_jstrict} it follows that 
\begin{equation} \label{minwj}
    \begin{aligned}
        \forall j\neq i: s+w_j>0\Leftrightarrow \forall j\neq i: w_j>-s \Leftrightarrow \underset{j\neq i}{\text{min}}\ w_j>-s.
    \end{aligned}
\end{equation}
The inequalities in the equations \eqref{s<1} and \eqref{minwj} finally produce the bounds on $s$:
\begin{equation}
    -\underset{j\neq i}{\text{min}}\ w_j<s<1.
\end{equation}
Moreover, because it is required that $\sum_{j=1}^nw_j=1$, it follows that $\underset{j\neq i}{\mathrm{min}}\ w_j\leq\frac{1}{n-1}$. Hence the necessary condition turns into:
\begin{equation}\label{bounds s}
    -\frac{1}{n-1}\leq -\underset{j\neq i}{\text{min}} \ w_j<s< 1.
\end{equation}
We continue to show the necessary upper and lower bound on $l$. 
From equation \eqref{allcneq} we obtain:
\begin{equation}\label{l-an-1}
   \phi(1-s)(l-a_{n-1}) \leq (1-p_0)(\delta-1)\leq 0.
\end{equation}
From equation \eqref{1-sstrict} we know $\phi(1-s)>0$. Together with equation \eqref{l-an-1} this implies the necessary condition
\begin{equation} 
    l-a_{n-1}\leq 0 \Leftrightarrow l\leq a_{n-1}.
\end{equation}
We continue with investigating the lower-bound on $l$, from equation \eqref{alldneq} 
\begin{equation}
0\leq p_0(1-\delta)\leq\phi(1-s)(l-b_0)\leq \delta+(1-\delta)p_0.
\end{equation}
Because $\phi(1-s)>0$ it follows that 
$l\geq b_0.$
Naturally, when $l=a_{n-1}$ by assumption \ref{assumptions} it holds that $l>b_0$ and when $l=b_0$ then $l<a_{n-1}$. This completes the proof.~\hfill \qedsymbol

\subsection{Proof of Lemma \ref{p0con}}
For brevity, in the following proof we refer to equations that are found in the proof of Proposition \ref{necessary}.
Assume the ZD strategy is extortionate, hence $l=b_0$. From the lower bound in \eqref{alldneq} in order for $l$ to be enforceable, it is necessary that $p_0=0$. This proves the first statement.
 Now assume the ZD strategy is generous, hence $l=a_{n-1}$. From the lower bound in \eqref{allcneq} in order for $l$ to be enforceable, it is necessary that $p_0=1$. This proves the second statement and completes the proof.~\hfill\qedsymbol

\subsection{Proof of Theorem \ref{Enforceable}}\label{subsec: proof of theorem enforceable}

In the following we refer to the key player, who is employing the ZD strategy, as player $i$.
Let $\sigma=(x_1,\dots,x_n)$ such that $x_k\in\A$ and let $\sigma^C$ be the set of $i's$ co-players that cooperate and let $\sigma^D$ be the set of $i's$ co-players that defect. 
Also, let $|\sigma|$ be the total number of cooperators in $\sigma$ including player $i$.
Using this notation, for some action profile $\sigma$ we may write the ZD strategy as
\begin{equation}
    \delta \mathbf{p}_\sigma= \mathbf{p}^{\mathrm{rep}}+\phi[(1-s)(l-g^i_\sigma)+\sum_{j\neq i} ^nw_j(g_\sigma^i-g_\sigma^j)]-(1-\delta)p_0.
\end{equation}
Also, note that
\begin{equation}\label{sumwjgj}
    \sum_{j\neq i}^nw_jg_\sigma^j=\sum_{k\in\sigma^D}w_kg^k_{\sigma}+\sum_{h\in\sigma^C}w_hg_\sigma^h,
\end{equation}
and because $\sum_{j\neq i}^nw_j=1$ it holds that
$ \sum_{l\in\sigma^C}w_l=1-\sum_{k\in\sigma^D}w_k.$
Substituting this into equation \eqref{sumwjgj} and using the payoffs as in Table \ref{multiplayer payoff} we obtain
  $ \sum_{j\neq i}^nw_jg_\sigma^j= a_{|\sigma|-1}+\sum_{j\in\sigma^D}w_j(b_{|\sigma|}-a_{|\sigma|-1}).$
Accordingly, the entries of the ZD strategy $\delta\mathbf{p}_\sigma$ are given by equation \eqref{sigma p}. For all $\sigma\in\boldsymbol{\A}$ we require that
\begin{equation}\label{unit}
0\leq\delta\mathbf{p}_{\sigma}\leq \delta. 
\end{equation}
This leads to the inequalities in equations \eqref{si=c bounds} and \eqref{si=d bounds}.
Because $\phi>0$ can be chosen arbitrarily small, the inequalities in equation \eqref{si=c bounds} can be satisfied for some $\delta\in(0,1)$ and $p_0\in[0,1]$ if and only if for all $\sigma$ such that $x_i=C$ the inequalities in equation \eqref{upper l} are satisfied.
\begin{figure*}
\begin{equation}\label{sigma p}
\begin{aligned}
    \delta\mathbf{p}_{\sigma}=\begin{cases}
    1+\phi\left[ (1-s)(l-a_{|\sigma|-1})-\sum\limits_{j\in\sigma^D}w_j(b_{|\sigma|}-a_{|\sigma|-1})\right]-(1-\delta)p_0,\ &\text{if}\ x_i=C, \\
    \phi\left[ (1-s)(l-b_{|\sigma|})+\sum\limits_{j\in\sigma^C}w_j(b_{|\sigma|}-a_{|\sigma|-1})\right]-(1-\delta)p_0,\ &\text{if} \ x_i=D.
    \end{cases}
    \end{aligned}
\end{equation}
\begin{equation}\label{si=c bounds}
0\leq (1-\delta)(1-p_0)\leq\phi\left[ (1-s)(a_{|\sigma|-1}-l)+\sum\limits_{j\in\sigma^D}w_j(b_{|\sigma|}-a_{|\sigma|-1})\right]\leq1-(1-\delta)p_0
\end{equation}
\begin{equation}\label{si=d bounds}
    0\leq(1-\delta)p_0\leq\phi\left[ (1-s)(l-b_{|\sigma|})+\sum\limits_{j\in\sigma^C}w_j(b_{|\sigma|}-a_{|\sigma|-1})\right]\leq\delta+(1-\delta)p_0.
\end{equation}
\end{figure*}

\begin{equation}\label{upper l}
    0\leq (1-s)(a_{|\sigma|-1}-l)+\sum\limits_{j\in\sigma^D}w_j(b_{|\sigma|}-a_{|\sigma|-1}).
\end{equation}
The inequality \eqref{upper l} together with the necessary condition $s<1$ from Proposition \ref{necessary} implies that 
\begin{equation}\label{upper l 2}
    a_{|\sigma|-1}+\frac{\sum\limits_{j\in\sigma^D}w_j(b_{|\sigma|}-a_{|\sigma|-1})}{(1-s)}\geq l,
\end{equation}
and thus provides an upper-bound on the enforceable baseline payoff $l$.
We now turn our attention to the inequalities in equation \eqref{si=d bounds} that can be satisfied if and only if for all $\sigma$ such that $x_i=D$ the following holds
\begin{equation}\label{lower l}
\begin{aligned}
        0\leq (1-s)(l-b_{|\sigma|})+\sum\limits_{j\in\sigma^C}w_j(b_{|\sigma|}-a_{|\sigma|-1})\\
        \xRightarrow{(1-s)>0}
        b_{|\sigma|}-\frac{\sum\limits_{j\in\sigma^C}w_j(b_{|\sigma|}-a_{|\sigma|-1})}{(1-s)}\leq l.
        \end{aligned}
\end{equation}

Combining equations \eqref{lower l} and \eqref{upper l 2} we obtain

\begin{equation}\label{bounds l total}
\begin{aligned}
     \underset{|\sigma| s.t. x_i=D}{\text{max}}\left\{ b_{|\sigma|}-\frac{\sum\limits_{l\in\sigma^C}w_l(b_{|\sigma|}-a_{|\sigma|-1})}{(1-s)} \right\} \leq l,\\
     l\leq \underset{|\sigma| s.t. x_i=C}{\text{min}}\left\{ a_{|\sigma|-1}+\frac{\sum\limits_{k\in\sigma^D}w_k(b_{|\sigma|}-a_{|\sigma|-1})}{(1-s)} \right\}.
\end{aligned}
\end{equation}

Because $b_{|\sigma|}-a_{|\sigma|-1}>0$ and $(1-s)>0$  the minima and maxima of the bounds in equation \eqref{bounds l total} are achieved by choosing the $w_j$ as small as possible. That is, the extrema of the bounds on $l$ are achieved for those states $\sigma|_{x_i=D}$ in which $\sum\limits_{l\in\sigma^C}w_l$ is minimum and those $\sigma|_{x_i=C}$ in which $\sum\limits_{k\in\sigma^D}w_k$ is minimum. 
 Let $\hat{w}_z=\underset{w_h\in w}{\text{min}}(\sum_{h=1}^zw_h)$ denote the sum of the $j$ smallest weights and let $\hat{w}_0=0$. By the above reasoning, equation \eqref{bounds l total} can be equivalently written as in the theorem in the main text. 
 Now, suppose we have a \textit{non-strict} upper-bound on the base-level payoff, i.e.,
 $$l=a_{|\sigma|-1}+\frac{\sum\limits_{k\in\sigma^D}w_k(b_{|\sigma|}-a_{|\sigma|-1})}{(1-s)}.$$ 
From equation \eqref{si=c bounds} it follows that $p_0=1$ is required. Then equation \eqref{si=d bounds} implies
\begin{equation}
\begin{aligned}
     0<(1-s)(l-b_{|\sigma|})+\sum\limits_{j\in\sigma^C}w_j(b_{|\sigma|}-a_{|\sigma|-1})\\
     \xRightarrow{(1-s)>0}
     b_{|\sigma|}-\frac{\sum\limits_{j\in\sigma^C}w_j(b_{|\sigma|}-a_{|\sigma|-1})}{(1-s)}<l. 
\end{aligned}
\end{equation} 

Which is exactly the corresponding lower-bound of $l$, that is thus required to be strict when the upper-bound is non-strict. 

Now suppose we have a non-strict lower bound, e.g. $$l=b_{|\sigma|}-\frac{\sum\limits_{l\in\sigma^C}w_l(b_{|\sigma|}-a_{|\sigma|-1})}{(1-s)}.$$ 
From equation \eqref{si=d bounds} it follows that $p_0=0$ is required. Then, the inequalities in equation \eqref{si=c bounds} require that
\begin{equation}
\begin{aligned}
0<(1-s)(a_{|\sigma|-1}-l)+\sum\limits_{j\in\sigma^D}w_j(b_{|\sigma|}-a_{|\sigma|-1})\\
\xRightarrow{(1-s)>0}
a_{|\sigma|-1} +\frac{\sum\limits_{j\in\sigma^D}w_j(b_{|\sigma|}-a_{|\sigma|-1})}{(1-s)}>l.
\end{aligned}
\end{equation}
This completes the proof.~\hfill\qedsymbol

\subsection{Proof of Theorem \ref{prop: extortionate thresholds}}
For brevity in the following proof we refer to equations that can be found in the proof of Theorem \ref{Enforceable}. 
From Lemma \ref{p0con} we know that in order for the extortionate payoff relation  to be enforceable it is necessary that $p_0=0$. 
By substituting this into equation \eqref{si=c bounds} it follows that in order for the payoff relation to be enforceable it is required that for all $\sigma$ such that $x_i=C$ the following holds:
\begin{equation}\label{rhoC>0}
   \rho^C(\sigma)=(1-s)(a_{|\sigma|-1}-l)+\sum_{j\in\sigma^D}w_j(b_{|\sigma|-a_{|\sigma|-1}})>0.
\end{equation}
Hence, Equation \eqref{si=c bounds} with $p_0=0$ implies that for all $\sigma$ such that $x_i=C$ it holds that
\begin{equation}\label{underlinerhoc}
    \frac{1-\delta}{\rho^C(\sigma)}\leq \phi\leq \frac{1}{\rho^C(\sigma)}\Rightarrow \frac{1-\delta}{\underline{\rho}^C(z,\hat{w}_z)}\leq \phi\leq \frac{1}{\overline{\rho}^C(z,\tilde{w}_z)} .
\end{equation}
Naturally, $\overline{\rho}^C\geq\underline{\rho}^C$. In the special case in which equality holds, it follows from equation \eqref{underlinerhoc} that $\delta\geq0$, which is true by definition of $\delta$. 
We continue to investigate the case in which $\overline{\rho}^C>\underline{\rho}^C$.
In this case, a solution to equation \eqref{underlinerhoc} for some $\phi>0$ exists if and only if
\begin{equation}\label{barrhoc-rhoc}
    \frac{1-\delta}{\underline{\rho}^C(z,\hat{w}_z)}\leq \frac{1}{\overline{\rho}^C(z,\tilde{w}_z)}\Rightarrow  \delta\geq\frac{\overline{\rho}^C-\underline{\rho}^C}{\overline{\rho}^C},
\end{equation}
which leads to the first expression in the theorem.
Now, from equation \eqref{si=d bounds} with $p_0=0$, it follows that in order for the payoff relation to be enforceable it is necessary that
\begin{equation}\label{overlinerhod}
 \forall \sigma \text{\ s.t.\ } x_i=D:\quad 0\leq\phi\rho^D(\sigma)\leq \delta \Rightarrow 0\leq\phi\overline{\rho}^D(z,\tilde{w}_z)\leq \delta.
 \end{equation}
Because $\phi>0$ is necessary for the payoff relation to be enforceable, it follows that $\rho^D(\sigma)\geq 0$ for all $\sigma$ such that $x_i=D$. 
Let us first investigate the special case in which $\overline{\rho}^D(z,\tilde{w}_z)=0$. 
Then \eqref{overlinerhod} is satisfied for any $\phi>0$ and $\delta\in(0,1)$.
Now, assume $\overline{\rho}^D(z,\tilde{w}_z)>0$. Then, equations \eqref{overlinerhod} and \eqref{underlinerhoc} imply
\begin{equation}\label{eq: rho bounds ext}
\frac{1-\delta}{\underline{\rho}^C(z,\hat{w}_z)}\leq \phi\leq\frac{\delta}{\overline{\rho}^D(z,\tilde{w}_z)}.
\end{equation}
In order for such a $\phi$ to exist it needs to hold that 
 \begin{equation}\label{deltageqrhod}
      \frac{1-\delta}{\underline{\rho}^C(z,\hat{w}_z)}\leq \frac{\delta}{\overline{\rho}^D(z,\tilde{w}_z)}\xRightarrow{\overline{\rho}^D,\ \underline{\rho}^C>0} \delta\geq\frac{\overline{\rho}^D}{\overline{\rho}^D+\underline{\rho}^C}.
 \end{equation}
 This completes the proof.~\hfill\qedsymbol

\subsection{Proof of Theorem \ref{prop: generous thresholds}}
The proof is similar to the extortionate case in the proof of Theorem \ref{prop: extortionate thresholds}.
From Lemma \ref{p0con} we know that in order for the generous payoff relation to be enforceable it is necessary that $p_0=1$. By substituting this into equation \eqref{si=d bounds} it follows that in order for the payoff relation to be enforceable it is required that for all $\sigma$ such that $x_i=D$ the following holds:
\begin{equation}\label{rhoC>0d}
   \rho^D(\sigma)=(1-s)(l-b_{|\sigma|})+\sum_{j\in\sigma^C}w_j(b_{|\sigma|-a_{|\sigma|-1}})>0.
\end{equation}
Hence, equation \eqref{si=d bounds} with $p_0=1$ implies that for all $\sigma$ such that $x_i=D$ it holds that
\begin{equation}\label{underlinerhocd}
    \frac{1-\delta}{\rho^D(\sigma)}\leq \phi\leq \frac{1}{\rho^D(\sigma)}\Rightarrow \frac{1-\delta}{\underline{\rho}^D(z,\hat{w}_z)}\leq \phi\leq \frac{1}{\overline{\rho}^D(z,\tilde{w}_z)}.
\end{equation}
If $\overline{\rho}^D=\underline{\rho}^D>0$ this implies $\delta\geq0$. Otherwise equation \eqref{underlinerhocd} implies that
\begin{equation}\label{one generous bound}
\frac{1-\delta}{\underline{\rho}^D(z,\hat{w}_z)}\leq \frac{1}{\overline{\rho}^D(z,\tilde{w}_z)} \Rightarrow \delta\geq\frac{\overline{\rho}^D-\underline{\rho}^D}{\overline{\rho}^D},
\end{equation}
which leads to the first expression in the theorem. 
Moreover, from equation \eqref{si=c bounds} we know that the following must hold:
\begin{equation}\label{rhocleqdelta}
    \forall \sigma \text{\ s.t.\ } x_i=C:\quad 0\leq\phi\rho^C(\sigma)\leq \delta \Rightarrow 0\leq\phi\overline{\rho}^C(z,\tilde{w}_z)\leq\delta.
\end{equation}
Because $\phi>0$ it follows that $\rho^C(\sigma)\geq 0$ for all $\sigma$ such that $x_i=C$. 
Let us now consider the special case in which $\overline{\phi}\rho^C(z,\tilde{w}_z)=0$. 
Then, equation \eqref{rhocleqdelta} is satisfied for any $\phi>0$ and $\delta\in(0,1)$.
Now suppose $\overline{\rho}^C(z,\tilde{w}_z)>0$. Then, \eqref{rhocleqdelta} and \eqref{underlinerhocd} imply that in order for the generous strategy to be enforceable it is necessary that
\begin{equation}
  \frac{1-\delta}{\underline{\rho}^D(z,\hat{w}_z)}\leq \phi\leq   \frac{\delta}{\overline{\rho}^C(z,\tilde{w}_z)}.
\end{equation}
Such a $\phi$ exists if and only if
\begin{equation}
    \frac{1-\delta}{\underline{\rho}^D(z,\hat{w}_z)}\leq\frac{\delta}{\overline{\rho}^C(z,\tilde{w}_z)}\xRightarrow{\underline{\rho}^D,\ \overline{\rho}^C>0}\delta\geq\frac{\overline{\rho}^C}{\underline{\rho}^D+\overline{\rho}^C}.
\end{equation}
This completes the proof.~\hfill\qedsymbol

\subsection{Proof of Theorem \ref{thm: equalizer strategies thresholds}}

	For brevity, we refer to equations found in the proof of Theorem \ref{Enforceable}. From \eqref{si=c bounds} and \eqref{si=d bounds} it follows that in order for the payoff relation to be enforceable for $p_0\in(0,1)$ it must hold that for all $\sigma$ such that $x_i=C$, $\rho^C(\sigma)>0,$ and for all $\sigma$ such that $x_i=D$, $\rho^D(\sigma)>0.$ For the existence of equalizer strategies this must also hold for the special case in which $s=0$. Hence, we can rewrite \eqref{si=c bounds} and \eqref{si=d bounds} to obtain the following set of inequalities
	\begin{align}
	\frac{(1-\delta)(1-p_0)}{\underline{\rho}^C(z,\hat{w}_z)}\leq&\phi\leq\frac{1-(1-\delta)p_0}{\overline{\rho}^C(z,\tilde{w}_z) },\\
	\frac{(1-\delta)p_0}{\underline{\rho}^D(z,\hat{w}_z)}\leq \phi &\leq \frac{\delta+(1-\delta)p_0}{\overline{\rho}^D(z,\tilde{w}_z) }.
	\end{align}
	There exists such a $\phi>0$ if and only if the following inequalities are satisfied
	\begin{align}
	\frac{(1-\delta)p_0}{\underline{\rho}^D(z,\hat{w}_z)}\leq\frac{\delta+(1-\delta)p_0}{\overline{\rho}^D(z,\tilde{w}_z) }\label{p0 1},\\
	\frac{(1-\delta)p_0}{\underline{\rho}^D(z,\hat{w}_z)}\leq\frac{1-(1-\delta)p_0}{\overline{\rho}^C(z,\tilde{w}_z) }\label{p0 2},\\
	\frac{(1-\delta)(1-p_0)}{\underline{\rho}^C(z,\hat{w}_z)}\leq \frac{1-(1-\delta)p_0}{\overline{\rho}^C(z,\tilde{w}_z) } \label{p0 3},  \\
	\frac{(1-\delta)(1-p_0)}{\underline{\rho}^C(z,\hat{w}_z)}\leq\frac{\delta+(1-\delta)p_0}{\overline{\rho}^D(z,\tilde{w}_z) }\label{p0 4}.	
	\end{align}
	By collecting the terms in $p_0$ and $\delta$ for \eqref{p0 1}-\eqref{p0 4} the conditions can be derived as follows.
	The condition in \eqref{p0 1} can be satisfied if and only if 
	$$p_0(1-\delta)\left(\overline{\rho}^D(z,\tilde{w}_z)-\underline{\rho}^D(z,\hat{w}_z)\right)\leq \underline{\rho}^D(z,\hat{w}_z)\delta.$$	
	In the special case that $\overline{\rho}^D(z,\tilde{w}_z)-\underline{\rho}^D(z,\hat{w}_z)=0$, this is satisfied for every $p_0\in(0,1)$ and $\delta\in(0,1)$. On the other hand, if $\overline{\rho}^D(z,\tilde{w}_z)-\underline{\rho}^D(z,\hat{w}_z)>0$, then the inequality can be satisfied for every $p_0\in(0,1)$ if and only if \eqref{req1} holds.
Likewise, \eqref{p0 3} can be satisfied if and only if
$$-p_0(1-\delta)\left(\overline{\rho}^C-\underline{\rho}^C\right)\leq \underline{\rho}^C-(1-\delta)\overline{\rho}^C.$$
If 	$\overline{\rho}^C-\underline{\rho}^C=0$, this inequality is satisfied for every $p_0\in(0,1).$
On the other hand, if $\overline{\rho}^C-\underline{\rho}^C>0$, the inequality is satisfied if and only if the condition in \eqref{req3} holds.
\eqref{p0 2} holds if and only if the condition in \eqref{req4} holds.
Finally, \eqref{p0 4} holds if and only if the condition in \eqref{req2} holds.

\section{Final remarks}\label{fw}
\textcolor{black}{
We have extended the existing results for ZD strategies to multiplayer social dilemmas with discounting.
However, the fundamental relation between the memory-one strategy and the mean distribution is independent of the structure and symmetry of the game and thus the results in this paper can be extended by considering discounted multiplayer games that are not social dilemmas or have asymmetric single-round payoffs. 
Our theory supports the finding that due to the finite expected number of rounds the initial probability to cooperate of the key player remains important for the opportunities to exert control.
Based on the necessary initial probability to cooperate we derived expressions for the minimum discount factor above which a ZD strategy can enforce some desired generous or extortionate payoff relation. 
Because equalizer strategies do not impose any conditions on the initial probability to cooperate, 
we have derived a condition that ensures the desired equalizer strategy is enforceable for a variable initial probability to cooperate in the open unit interval. 
By combining the set of enforceable payoff relations and the threshold discount factors our results can also be used to investigate under which conditions on the expected number of rounds, generous and extortionate ZD strategies, that both can promote mutual cooperation in social dilemmas, constitute a symmetric Nash equilibrium in the multiplayer social dilemma. Future research can include individual or time-varying discounting functions, and the analysis of subgame perfection of the ZD strategy Nash equilibria.}


\bibliography{references}

\begin{thebibliography}{10}

\bibitem{akin2016iterated}
E.~Akin.
\newblock The iterated prisoner’s dilemma: good strategies and their
  dynamics.
\newblock {\em Ergodic Theory, Advances in Dynamical Systems}, pages 77--107,
  2016.

\bibitem{6985752}
A.~{Al Daoud}, G.~{Kesidis}, and J.~{Liebeherr}.
\newblock Zero-determinant strategies: A game-theoretic approach for sharing
  licensed spectrum bands.
\newblock {\em IEEE Journal on Selected Areas in Communications},
  32(11):2297--2308, November 2014.

\bibitem{7007772}
A.~K. {Farraj}, E.~M. {Hammad}, A.~A. {Daoud}, and D.~{Kundur}.
\newblock A game-theoretic control approach to mitigate cyber switching attacks
  in smart grid systems.
\newblock In {\em 2014 IEEE International Conference on Smart Grid
  Communications (SmartGridComm)}, pages 958--963, Nov 2014.

\bibitem{govaert2019zero}
A.~Govaert and M.~Cao.
\newblock Zero-determinant strategies in finitely repeated n-player games.
\newblock {\em IFAC-PapersOnLine}, 52(3):150--155, 2019.

\bibitem{hilbe2018partners}
C.~Hilbe, K.~Chatterjee, and M.~A. Nowak.
\newblock Partners and rivals in direct reciprocity.
\newblock {\em Nature Human Behaviour}, 2:469–477, 2018.

\bibitem{hilbe2013evolution}
C.~Hilbe, M.~A. Nowak, and K.~Sigmund.
\newblock Evolution of extortion in iterated prisoner’s dilemma games.
\newblock {\em Proceedings of the National Academy of Sciences}, page
  201214834, 2013.

\bibitem{hilbe2014extortion}
C.~Hilbe, T.~R{\"o}hl, and M.~Milinski.
\newblock Extortion subdues human players but is finally punished in the
  prisoner’s dilemma.
\newblock {\em Nature Communications}, 5:3976, 2014.

\bibitem{hilbe2015partners}
C.~Hilbe, A.~Traulsen, and K.~Sigmund.
\newblock Partners or rivals? strategies for the iterated prisoner's dilemma.
\newblock {\em Games and Economic Behavior}, 92:41--52, 2015.

\bibitem{hilbe2014cooperation}
C.~Hilbe, B.~Wu, A.~Traulsen, and M.~A. Nowak.
\newblock Cooperation and control in multiplayer social dilemmas.
\newblock {\em Proceedings of the National Academy of Sciences},
  111(46):16425--16430, 2014.

\bibitem{hilbe2015evolutionary}
C.~Hilbe, B.~Wu, A.~Traulsen, and M.~A. Nowak.
\newblock Evolutionary performance of zero-determinant strategies in
  multiplayer games.
\newblock {\em Journal of Theoretical Biology}, 374:115--124, 2015.

\bibitem{ichinose2018zero}
G.~Ichinose and N.~Masuda.
\newblock Zero-determinant strategies in finitely repeated games.
\newblock {\em Journal of Theoretical Biology}, 438:61--77, 2018.

\bibitem{kerr2004altruism}
B.~Kerr, P.~Godfrey-Smith, and M.~W. Feldman.
\newblock What is altruism?
\newblock {\em Trends in Ecology \& Evolution}, 19(3):135--140, 2004.

\bibitem{liang2015analysis}
H.~Liang, M.~Cao, and X.~Wang.
\newblock Analysis and shifting of stochastically stable equilibria for
  evolutionary snowdrift games.
\newblock {\em Systems \& Control Letters}, 85:16--22, 2015.

\bibitem{mailath2006repeated}
G.~J. Mailath and L.~Samuelson.
\newblock {\em Repeated Games and Reputations: Long-Run Relationships}.
\newblock Oxford University Press, 2006.

\bibitem{mamiya2019strategies}
A.~Mamiya and G.~Ichinose.
\newblock Strategies that enforce linear payoff relationships under observation
  errors in repeated prisoner’s dilemma game.
\newblock {\em Journal of Theoretical Biology}, 477:63--76, 2019.

\bibitem{mcavoy2016autocratic}
A.~McAvoy and C.~Hauert.
\newblock Autocratic strategies for iterated games with arbitrary action
  spaces.
\newblock {\em Proceedings of the National Academy of Sciences},
  113(13):3573--3578, 2016.

\bibitem{mcavoy2017autocratic}
A.~McAvoy and C.~Hauert.
\newblock Autocratic strategies for alternating games.
\newblock {\em Theoretical Population Biology}, 113:13--22, 2017.

\bibitem{nowak2011supercooperators}
M.~Nowak and R.~Highfield.
\newblock {\em Supercooperators: Altruism, Evolution, and Why We Need Each
  Other to Succeed}.
\newblock Simon and Schuster, 2011.

\bibitem{pan2015zero}
L.~Pan, D.~Hao, Z.~Rong, and T.~Zhou.
\newblock Zero-determinant strategies in iterated public goods game.
\newblock {\em Scientific Reports}, 5:13096, 2015.

\bibitem{press2012iterated}
W.~H. Press and F.~J. Dyson.
\newblock Iterated prisoner’s dilemma contains strategies that dominate any
  evolutionary opponent.
\newblock {\em Proceedings of the National Academy of Sciences},
  109(26):10409--10413, 2012.

\bibitem{riehl2018survey}
J.~Riehl, P.~Ramazi, and M.~Cao.
\newblock A survey on the analysis and control of evolutionary matrix games.
\newblock {\em Annual Reviews in Control}, 45:87--106, 2018.

\bibitem{souza2009evolution}
M.~O. Souza, J.~M. Pacheco, and F.~C. Santos.
\newblock Evolution of cooperation under n-person snowdrift games.
\newblock {\em Journal of Theoretical Biology}, 260(4):581--588, 2009.

\bibitem{stewart2016evolutionary}
A.~J. Stewart, T.~L. Parsons, and J.~B. Plotkin.
\newblock Evolutionary consequences of behavioral diversity.
\newblock {\em Proceedings of the National Academy of Sciences},
  113(45):E7003--E7009, 2016.

\bibitem{stewart2012extortion}
A.~J. Stewart and J.~B. Plotkin.
\newblock Extortion and cooperation in the prisoner’s dilemma.
\newblock {\em Proceedings of the National Academy of Sciences},
  109(26):10134--10135, 2012.

\bibitem{stewart2013extortion}
A.~J. Stewart and J.~B. Plotkin.
\newblock From extortion to generosity, evolution in the iterated prisoner’s
  dilemma.
\newblock {\em Proceedings of the National Academy of Sciences}, page
  201306246, 2013.

\bibitem{van2012multi}
M.~van Veelen and M.~A. Nowak.
\newblock Multi-player games on the cycle.
\newblock {\em Journal of Theoretical Biology}, 292:116--128, 2012.

\bibitem{wang2016extortion}
Z.~Wang, Y.~Zhou, J.~W. Lien, J.~Zheng, and B.~Xu.
\newblock Extortion can outperform generosity in the iterated prisoner’s
  dilemma.
\newblock {\em Nature Communications}, 7(1):11125, 2016.

\bibitem{7037497}
H.~{Zhang}, N.~{Dusit}, L.~{Song}, T.~{Jiang}, and Z.~{Han}.
\newblock Zero-determinant strategy in cheating management of wireless
  cooperation.
\newblock In {\em 2014 IEEE Global Communications Conference}, pages
  4382--4386, Dec 2014.

\bibitem{7024762}
H.~{Zhang}, F.~{Li}, D.~{Niyato}, L.~{Song}, T.~{Jiang}, and Z.~{Han}.
\newblock Zero-determinant strategy for power control of small cell network.
\newblock In {\em 2014 IEEE International Conference on Communication Systems},
  pages 41--45, Nov 2014.

\bibitem{zheng2007cooperative}
D.-F. Zheng, H.~Yin, C.-H. Chan, and P.~Hui.
\newblock Cooperative behavior in a model of evolutionary snowdrift games with
  n-person interactions.
\newblock {\em EPL (Europhysics Letters)}, 80(1):18002, 2007.

\end{thebibliography}
\bibliographystyle{abbrv}

\appendices
\section{Proof of Proposition \ref{prop: ex. ext. pgg}}\label{app: ex. ext. pgg}
By applying Theorem \ref{Enforceable} to the public goods game it follows that the enforceable baseline payoffs are 
 \begin{align}\label{pggboundsLow}
     \textup{max}\left\{ 0,\frac{rc(n-1)}{n}-\frac{c}{1-s} \right\} &\leq l,\\
     \textup{min}\left\{ \frac{rc}{n}-c+\frac{c}{1-s},rc-c\right\}&\geq l\label{pggboundsUp},
\end{align}
with at least one strict inequality.
For extortionate strategies we set $l=0$ and $0<s<1$. 
The inequalities in equations \eqref{pggboundsLow} and \eqref{pggboundsUp} become
\begin{align}
     \text{max}\left\{ 0,\frac{rc(n-1)}{n}-\frac{c}{1-s} \right\} \leq 0\label{extortionate pgg LOW}\\
     \text{min}\left\{ \frac{rc}{n}-c+\frac{c}{1-s},rc-c\right\} \geq 0\label{extortionate pgg UP}
\end{align}
Solving for $s$ will yield the enforceable slopes in the extortionate ZD strategy.
Observe that a necessary condition for equation \eqref{extortionate pgg LOW} to hold is that the left hand side is equal to $0$ and in order for this to hold it is required that
\begin{align}
n(r(1-s)-1)\leq r(1-s).\label{eq: n r s ext}
\end{align}
The conditions  $-\frac{1}{n-1}<s<1$ in Theorem \ref{Enforceable} and the assumption that $r$ is positive implies that $r(1-s)$ in the right-hand side of equation \eqref{eq: n r s ext} is required to be strictly positive. 
It follows that if $r(1-s)-1\leq0$ or equivalently $s\geq\frac{r-1}{r}$ the inequalities in equation \eqref{eq: n r s ext} are always satisfied. 
Note that if $s\geq\frac{r-1}{r}$ is satisfied, the left-hand side of the inequality in equation \eqref{extortionate pgg UP} reads as $rc-c$. 
It follows that for every $r>1$, every $s\geq\frac{r-1}{r}$ can be enforced independent of $n$.
On the other hand, when $s<\frac{r-1}{r}$ in order for equation \eqref{eq: n r s ext} to be satisfied it must hold that
\begin{equation}\label{eq: effect of n}
n\leq\frac{r(1-s)}{r(1-s)-1}.
\end{equation}
Note that $s<\frac{r-1}{r}$ implies $r(1-s)-1\neq 0$ so the above inequality in well-defined.
If \eqref{eq: effect of n} does not hold and $s<\frac{r-1}{r}$ than $\frac{rc(n-1)}{n}-\frac{c}{1-s}> 0$,
thus the lower-bound in equation \eqref{extortionate pgg LOW} is not satisfied and consequently there cannot exist extortionate strategies.
We now investigate the inequality in equation \eqref{extortionate pgg UP}. 
We already know that when $s\geq\frac{r-1}{r}$  the upper-bound reads as $0<rc-c$ and is satisfied for all $r>1$. Similarly, when $s<\frac{r-1}{r}$ and \eqref{eq: effect of n} holds the upper-bound reads as $0<rc-c$. 
We now move on to generous strategies with $l=rc-c$ and $0<s<1$. 
The inequalities in equations \eqref{pggboundsLow} and \eqref{pggboundsUp} become
\begin{align}
     \text{max}\left\{ 0,\frac{rc(n-1)}{n}-\frac{c}{1-s} \right\} &\leq rc-c, \label{generous pgg LOW}\\
     \text{min}\left\{ \frac{rc}{n}-c+\frac{c}{1-s},rc-c\right\} &\geq rc-c.\label{generous pgg UP}
\end{align}
Clearly in order for generous strategies to exist it is necessary that the left hand side of equation \eqref{generous pgg UP} reads as $rc-c$. 
Therefore it is required that
\begin{equation*}
    \frac{rc}{n}-c+\frac{c}{1-s}\geq rc-c \Leftrightarrow n(r(1-s)-1)\leq (1-s)r.
\end{equation*}
Hence, this condition is equivalent to the condition in equation \eqref{eq: n r s ext} and thus this condition gives the same feasible region for the existence of extortionate strategies. 
Now suppose that, $s<\frac{r-1}{r}$  and $n\geq\frac{r(1-s)}{r(1-s)-1}$. Also in this case, only equality is possible i.e. $n=\frac{r(1-s)}{r(1-s)-1}$ because otherwise the upper-bound is not satisfied. 
Next to this, if $s<\frac{r-1}{r}$ and $n=\frac{r(1-s)}{r(1-s)-1}$ in order for the lower-bound to be satisfied it is required that
$
    rc-c=\frac{rc}{n}-c+\frac{c}{1-s}\geq rc-c\geq0,
$
which is satisfied with a strict lower-bound for all $r>1$.~\hfill\qedsymbol

\section{Proof of Proposition \ref{prop: thresholds pgg}}\label{app: thresholds pgg}
For the linear public goods game the parameters in equation \eqref{eq: rho functions} can be obtained from the extrema of the following functions
\begin{equation}\label{eq: rho functions pgg}
\begin{aligned}
    \rho^C(z)&=(1-s)\left(\frac{rc(z+1)}{n}-c-l\right)+\frac{n-z-1}{n-1}c, \\
    \rho^D(z)&=(1-s)\left(l-\frac{rcz}{n}\right)+\frac{z}{n-1}c
    \end{aligned}
\end{equation}
We focus first prove the case in which $l=0$ and $0<s<1$, and thus the strategy is extortionate. 
In this case the equations in \eqref{eq: rho functions pgg} become
\begin{align}
    \rho_e^C(z):=(1-s)\left(\frac{rc(z+1)}{n}-c\right)+\frac{n-z-1}{n-1}c \label{eq:APP rho ext C}\\
    \rho_e^D(z):=-(1-s)\left(\frac{rcz}{n}\right)+\frac{z}{n-1}c\label{eq:APP rho ext D}
\end{align}
We continue to obtain the maximizers and minimizers of equations \eqref{eq:APP rho ext C} and \eqref{eq:APP rho ext D}, that because of linearity in $z$ can only occur at the extreme points $z=0$ and $z=n-1$. When $n>r$ and $r>1$, as is the case when the linear public goods game is a social dilemma, we have the following simple conditions on the slope of the extortionate strategy. If $-\frac{1}{n-1}<s\leq 1-\frac{n}{r(n-1)}$ no extortionate or generous strategies can exist. 
Hence assume $s\geq 1-\frac{n}{r(n-1)}$. Then,
\begin{equation}\label{eq: APP large s maximizers extortionate pgg}
    \begin{aligned}
    &\overline{\rho}_e^C=\rho_e^C(0)=(1-s)(\frac{rc}{n}-c)+c,\\ &\underline{\rho}_e^C=\rho_e^C(n-1)=(1-s)(rc-c)>0,\\
     &\overline{\rho}_e^D=\rho_e^D(n-1)=-(1-s)(\frac{rc(n-1)}{n})+c,\\ &\underline{\rho}_e^D=\rho_e^D(0)=0.
    \end{aligned}
\end{equation}
The fractions in Proposition \ref{prop: extortionate thresholds} become
\begin{align}
\frac{\overline{\rho}_e^D}{\overline{\rho}_e^D+\underline{\rho}_e^C}=\frac{\overline{\rho}_e^C-\underline{\rho}_e^C}{\overline{\rho}_e^C}&=\frac{(1-s)(\frac{r}{n}-r)+1}{(1-s)(\frac{r}{n}-1)+1}.
\end{align}
We focus now on the case in which $l=rc-c$ and $0<s<1$, and hence the strategy is generous. If $l=rc-c$ the equations in \eqref{eq: rho functions pgg} become
\begin{equation}\label{eq: rho functions pgg generous}
\begin{aligned}
    \rho_g^C(z)&:=(1-s)(\frac{rc(z+1)}{n}-rc)+\frac{n-z-1}{n-1}c \\
    \rho_g^D(z)&:=(1-s)(rc-c-\frac{rcz}{n})+\frac{z}{n-1}c
    \end{aligned}
\end{equation}
The extreme points of these functions read as 
\begin{equation}\label{eq: APP large s maximizers generous pgg}
    \begin{aligned}
    &\overline{\rho}_g^C=\rho_g^C(0)=\overline{\rho}_e^D,
    &\underline{\rho}_g^C=\rho_g^C(n-1)=\underline{\rho}_e^D,\\
    &\overline{\rho}_g^D=\rho_g^D(n-1)=\overline{\rho}_e^C, 
    &\underline{\rho}_g^D=\rho_g^D(0)=\underline{\rho}_e^C.
    \end{aligned}
\end{equation}
It follows that the fractions in Theorem \ref{prop: generous thresholds} are equivalent to those in Theorem \ref{prop: extortionate thresholds}.~\hfill\qedsymbol
\color{black}
\section{Proof of Proposition \ref{prop: ex. ext. nsd}}\label{app: ex. ext. nsd}
The following lemma characterizes the enforceable baseline payoffs in the multiplayer snowdrift game.
\begin{lemma}\label{lem: bounds nsd}
	For the multiplayer snowdrift game the enforceable baseline payoffs $l$ are determined as
	\begin{equation}\label{eq: bounds l NSD}
	\textup{max}\left\{0,b-\frac{c}{(n-1)(1-s)}\right\}\leq l\leq b-\frac{c}{n},
	\end{equation}
	with at least one strict inequality.
\end{lemma}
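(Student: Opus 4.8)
The plan is to specialize Theorem \ref{Enforceable} to the snowdrift payoffs of Example \ref{example: NSD} under the equal weights $w_j=\frac{1}{n-1}$ assumed in this section, for which $\hat{w}_z=\frac{z}{n-1}$. Substituting $a_z=b-\frac{c}{z+1}$ together with $b_z=b$ for $z\geq 1$ and $b_0=0$ into the two bounds of \eqref{l bounds complete} reduces the characterization of the enforceable $l$ to finding the extrema of two explicit functions of $z$. First I would handle the lower bound $L(z)=b_z-\frac{\hat{w}_z(b_z-a_{z-1})}{1-s}$: the term $z=0$ gives $L(0)=b_0=0$ since $\hat{w}_0=0$, while for every $z\geq 1$ one has $b_z-a_{z-1}=b-(b-\frac{c}{z})=\frac{c}{z}$, so the $\frac{1}{z}$ cancels the $z$ in $\hat{w}_z=\frac{z}{n-1}$ and $L(z)=b-\frac{c}{(n-1)(1-s)}$ is constant in $z$. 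Hence the maximum defining the lower bound is exactly $\max\{0,\,b-\frac{c}{(n-1)(1-s)}\}$, as claimed.

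The upper bound is the delicate part. Here $U(z)=a_z+\frac{\hat{w}_{n-z-1}(b_{z+1}-a_z)}{1-s}$; the endpoint $z=n-1$ gives $U(n-1)=a_{n-1}=b-\frac{c}{n}$ because $\hat{w}_0=0$, and I must show this is the global minimum over $0\leq z\leq n-1$. For $z\leq n-2$ one has $b_{z+1}-a_z=\frac{c}{z+1}$ and $\hat{w}_{n-z-1}=\frac{n-z-1}{n-1}$, so after dividing by $c>0$ the inequality $U(z)\geq b-\frac{c}{n}$ becomes $\frac{1}{n}-\frac{1}{z+1}+\frac{n-z-1}{(n-1)(1-s)(z+1)}\geq 0$. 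Since $0<s<1$ forces $\frac{1}{1-s}>1$, it suffices to verify the stronger statement obtained by replacing $\frac{1}{1-s}$ by $1$; combining the last two terms then collapses the left-hand side to $\frac{1}{n}-\frac{z}{(n-1)(z+1)}$, which is nonnegative precisely when $(n-1)(z+1)\geq nz$, i.e. when $n-z-1\geq 0$. This holds for every $z\leq n-1$ and strictly for $z\leq n-2$, so $U$ attains its minimum $b-\frac{c}{n}$ uniquely at $z=n-1$.

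The hard part will be exactly this monotonicity-style claim for the upper bound: a priori $U(z)$ need not be monotone and its extremum could sit at an interior $z$, so the key insight is that the worst case over the discount-dependent factor is $\frac{1}{1-s}\to 1$, after which the bound reduces to the purely combinatorial inequality $n-z-1\geq 0$. Finally, the clause ``with at least one strict inequality'' is inherited directly from the corresponding assertion in Theorem \ref{Enforceable}, so no additional argument is required there.
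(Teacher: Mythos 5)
Your overall route matches the paper's: specialize Theorem \ref{Enforceable} with equal weights so that $\hat{w}_z=\frac{z}{n-1}$, observe that the lower bound of \eqref{l bounds complete} equals $0$ at $z=0$ and is constant, $b-\frac{c}{(n-1)(1-s)}$, for every $z\geq 1$, and show the upper bound $U(z)=a_z+\frac{\hat{w}_{n-z-1}(b_{z+1}-a_z)}{1-s}$ attains its minimum $a_{n-1}=b-\frac{c}{n}$ at $z=n-1$. Your lower-bound computation, the endpoint evaluation $U(n-1)=b-\frac{c}{n}$, and the inheritance of the strictness clause from Theorem \ref{Enforceable} are all correct.

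There is, however, a genuine gap in the upper-bound step: the reduction ``replace $\frac{1}{1-s}$ by $1$'' is only valid when $\frac{1}{1-s}\geq 1$, i.e.\ when $s\geq 0$. The lemma is a statement about all enforceable payoff relations, and by Theorem \ref{Enforceable} these include slopes $-\frac{1}{n-1}<s<0$; for such $s$ one has $\frac{1}{1-s}<1$, so your substitution \emph{weakens} rather than strengthens the term $\frac{n-z-1}{(n-1)(1-s)(z+1)}$ and the claimed sufficiency fails. The paper's proof covers the whole enforceable range by an exact decomposition: it writes $U(z)=b-\frac{c}{(n-1)(1-s)}+\xi(z)$ with $\xi(z)=\frac{((n-1)s+1)c}{(n-1)(z+1)(1-s)}$, notes that $\xi(z)>0$ precisely because $s>-\frac{1}{n-1}$, concludes that $U$ is decreasing in $z$ with minimum at $z=n-1$, and separately checks that the $z=0$ bound $b-c+\frac{c}{1-s}$ exceeds $b-\frac{c}{n}$ if and only if $s>-\frac{1}{n-1}$. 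Your argument is repaired with a one-line change: instead of the worst-case substitution, multiply your inequality $\frac{1}{n}-\frac{1}{z+1}+\frac{n-z-1}{(n-1)(1-s)(z+1)}\geq 0$ by $n(n-1)(1-s)(z+1)>0$ to obtain the equivalent form $(n-z-1)\bigl(1+(n-1)s\bigr)\geq 0$, which holds (strictly for $z\leq n-2$) exactly on the enforceable range $s>-\frac{1}{n-1}$. This also makes transparent that the result uses enforceability of the slope itself, not merely $s<1$, which your version conceals by silently assuming $s>0$.
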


\begin{proof}
	Suppose $z=0$, then the inequalities in Theorem \ref{Enforceable} on the baseline payoff become
	\begin{equation}\label{eq: bounds l nSD z=0}
	0\leq l\leq b-c+\frac{c}{1-s}.
	\end{equation}
	And if $1\leq z\leq n-1$, the bounds on the enforceable baseline payoffs read as
	\begin{align}
	l&\geq b-\frac{c}{(n-1)(1-s)},\label{eq: NSD bounds LOW} \\
	l&\leq\underset{1\leq z\leq n-1}{\text{min}}\left\{ b-\frac{c}{z+1}+\frac{n-z-1}{n-1}\frac{c}{(z+1)(1-s)} \right\}.\label{eq:NSD bounds UP}
	\end{align}
	We continue to investigate the minimum upper-bound of $l$. 
	The upper-bound in \eqref{eq:NSD bounds UP} can be written as
	\begin{equation}
	l\leq \underset{1\leq z\leq n-1}{\text{min}}b+\underbrace{\frac{\left((n-1)s+1\right)c}{(n-1)(z+1)(1-s)}}_{:=\xi(z)}-\frac{c}{(n-1)(1-s)}.
	\end{equation}
	From Theorem \ref{Enforceable}, in order for a ZD strategy to exist it is necessary that $s<1$ and because in a multiplayer game $n>1$, the denominator of the fraction $\xi(z)$ is positive for all $0\leq z\leq n-1$. 
	Thus, if the numerator of $\xi(z)$ is positive as well, then the minimum of the upper-bound occurs when $z$ is maximum.
	Now because $c>0$ we have,
	$$[(n-1)s+1]c>0\Leftrightarrow (n-1)s+1>0 \Leftrightarrow s>-\frac{1}{n-1}.$$
	It follows from the bounds of enforceable slopes $s$ in Theorem \ref{Enforceable}, that $\xi(z)$ is indeed positive.
	Hence, for $1\leq z\leq n-1$ and enforceable slope $-\frac{1}{n-1}<s<1$, the minimum of the upper-bound occurs when all co-players are cooperating, i.e., $z=n-1$.
	In combination with the upper-bound in \eqref{eq: bounds l nSD z=0} for the case $z=0$ we obtain
	$ l \leq \text{min}\{b-\frac{c}{n},b-c+\frac{c}{1-s}\}$.
	Note that
	$b-\frac{c}{n}<b-c+\frac{c}{(1-s)} \Leftrightarrow  s>\frac{1}{1-n} \Leftrightarrow s>-\frac{1}{n-1}.$ Hence, for all enforceable slopes $-\frac{1}{n-1}<s<1$ we obtain
	$l\leq b-\frac{c}{n}.$
\end{proof}

The set of enforceable slopes for extortionate (resp. generous) strategies can be found by substituting $l=0$ (resp. $l=b-\frac{c}{n}$) into the \eqref{eq: bounds l NSD} and solve for $s$.\hfill\qedsymbol

\section{Proof of Proposition \ref{prop: thresholds generosity nsd}}\label{app: thr. gen. nsd}

The values of $\overline{\rho}^C(z)$ and $\underline{\rho}^C(z)$ from \eqref{eq: rho functions} for the multiplayer snowdrift game with equal weights are obtained from the extreme points of the following expression, for $0\leq z\leq n-1$:
\begin{equation}\label{eq: rho c functions nsd}
\begin{aligned}
\rho^C(z)&=(1-s)\left(b-\frac{c}{z+1}-l\right)+\frac{n-z-1}{n-1}\frac{c}{z+1}. \\
\end{aligned}
\end{equation}
For all enforceable slopes $-\frac{1}{n-1}<s<1$ the extreme points read as
\begin{equation}\label{eq: extrema rhoc}
\begin{aligned}
\overline{\rho}^C(z)&=\rho^C(0)=(1-s)(b-c-l)+c,\\
\underline{\rho}^C(z)&=\rho^C(n-1)=(1-s)(b-\frac{c}{n}-l).
\end{aligned}
\end{equation}
The values of $\overline{\rho}^D(z)$ and $\underline{\rho}^D(z)$ are obtained from the extreme points of the function
\begin{equation}\label{eq: z=0 rho functions nsd}
\rho^D(z)=\begin{cases}
(1-s)l, &\text{\ if\ } z=0\\
(1-s)(l-b)+\frac{c}{n-1}, &\text{\ if\ } z=1\dots n-1.
\end{cases}
\end{equation}
Assume $l=b-\frac{c}{n}$ and $0<s<1$ such that the ZD strategy is generous, from \eqref{eq: z=0 rho functions nsd} we obtain
	\begin{equation}
	\rho_g^D(z):=\begin{cases}
	(1-s)(b-\frac{c}{n}), &\text{\ if\ } z=0\\
	\frac{c}{n-1}-(1-s)\frac{c}{n}, &\text{\ if\ } z=1\dots n-1.
	\end{cases}
	\end{equation}
	For $s\leq 1-\frac{c}{b(n-1)}$ we have
	\begin{equation}\label{eq: app reg 1 small s maximizers generous nsd}
	\begin{aligned}
	&\overline{\rho}_g^D=\rho_g^D(0)=(1-s)(b-\frac{c}{n}),\
	&\underline{\rho}_g^D=\frac{c}{n-1}-(1-s)\frac{c}{n}.
	\end{aligned}
	\end{equation}
	Note that  $\underline{\rho}_g^D>0$ for all $s>-\frac{1}{n-1}$.
	Using the expressions in \eqref{eq: extrema rhoc} and substituting $l=0$ we also have
	
	\begin{equation}
	\begin{aligned}
	\overline{\rho}_g^C&=(1-s)\left(\frac{c}{n}-c\right)+c>0,\
	\underline{\rho}_g^C&=0.
	\end{aligned}
	\end{equation}
Again, 	note that $\overline{\rho}_g^C>0$ for all $s>-\frac{n}{n-1}$.
	The fractions in Theorem \ref{prop: generous thresholds} become
	\begin{align}
	\frac{\overline{\rho}_g^D-\underline{\rho}_g^D}{\overline{\rho}_g^D}&=\frac{(1-s)b-\frac{c}{n-1}}{(1-s)(b-\frac{c}{n})}, \ 
	\frac{\overline{\rho}_g^C}{\overline{\rho}_g^C+\underline{\rho}_g^D}&=\frac{n-1}{n}\label{eq: thres gen n case}
	\end{align}
	 We now continue to the case in which $1-\frac{c}{b(n-1)}<s<1$. 
	 Then, the extreme points are
	\begin{equation*}\label{eq: app reg 2 small s maximizers generous nsd}
	\begin{aligned}
	\overline{\rho}_g^D=\frac{c}{n-1}-(1-s)\frac{c}{n},\  \underline{\rho}_g^D=\rho_g^D(0)=(1-s)(b-\frac{c}{n}),
	\end{aligned}
	\end{equation*}
	and the fractions in Theorem \ref{prop: generous thresholds} become
	\begin{align}
	\frac{\overline{\rho}_g^D-\underline{\rho}_g^D}{\overline{\rho}_g^D}&=\frac{\frac{c}{n-1}-b(1-s)}{\frac{c}{n-1}-\frac{c}{n}(1-s)},\label{eq1}\\
	\frac{\overline{\rho}_g^C}{\overline{\rho}_g^C+\underline{\rho}_g^D}&=\frac{(1-s)(\frac{c}{n}-c)+c}{(1-s)(b-c)+c}.\label{eq2}
	\end{align}
	For all $0<s<1$ and $b>c>0$ the threshold in the Propositions follows.
 	The thresholds for extortionate strategies are obtained by substituting $l=0$ into \eqref{eq: extrema rhoc} and \eqref{eq: z=0 rho functions nsd} and find their extreme points for the enforceable slopes $s\geq1-\frac{c}{b(n-1)}$. The extreme points are:
	$$\overline{\rho}_e^D=\frac{c}{n-1}-(1-s)b\geq 0,\ 
	\underline{\rho}_e^D=0,$$ 
	$$\overline{\rho}_e^C=(1-s)(b-c)+c>0,
	\underline{\rho}_e^C=(1-s)(b-\frac{c}{n})>0.$$
	The threshold in the proposition is obtained by substituting these extreme points into the fractions of Theorem \ref{prop: extortionate thresholds}.

\end{document}